\newtheorem{lemma}{Lemma}
\newtheorem{theorem}{Theorem}
\begin{document}

\title{Fast Methods for Solving  the Cluster Containment Problem for Phylogenetic Networks}

\author{Andreas~D.~M.~Gunawan,
        Bingxin~Lu,
        and~Louxin~Zhang$^*$
\IEEEcompsocitemizethanks{\IEEEcompsocthanksitem A. Gunawan and L. Zhang are with the Department of Mathematics and B. Lu is with the Department of Computer Science, National University of Singapore, Singapore, 119076.
\IEEEcompsocthanksitem $^*$Corresponding author, e-mail: matzlx@nus.edu.sg}
}

\markboth{}%
{}
\IEEEtitleabstractindextext{%
\begin{abstract}
Genetic and comparative genomic studies indicate that extant 
 genomes are more properly considered to be a fusion product of random mutations over generations   and genomic material transfers between individuals of different
lineages. This has motivated researchers to adopt phylogenetic networks and other general models  to
study genome evolution.
One important problem arising from reconstruction and verification of phylogenetic networks is the cluster containment problem, namely determining whether or not a cluster of taxa is displayed in a phylogenetic network.
 In this work, a new upper bound for this NP-complete problem is established through an efficient reduction to the SAT problem. Two efficient (albeit exponential time) methods are also implemented. 
It is developed on the basis of generalization of the so-called reticulation-visible property of phylogenetic networks.  
\end{abstract}

\begin{IEEEkeywords}
Phylogenetic networks, reticulation-visibility, cluster containment problem, parameterized algorithm.
\end{IEEEkeywords}}

\maketitle

\IEEEdisplaynontitleabstractindextext
\setlength{\IEEEilabelindent}{8pt}
\setlength{\IEEEelabelindent}{8pt}
\setlength{\IEEEiedtopsep}{0pt}
\IEEEpeerreviewmaketitle

\newpage

\section{Introduction}\label{sec:introduction}

Genome evolution is shaped not only by random mutations over generations but also horizontal genetic transfers between individuals of different species \cite{Chan_2013,marcussen2014ancient,Treangen_11_PLOS}. Phylogenetic trees have been used to study the evolution of life for over 200 years. The tree structure, however,  is not powerful enough to capture horizontal gene transfer,  hybridization and genetic recombination events, which occur frequently in viruses and bacteria. 
 This has led researchers to establish more general evolutionary models, including  phylogenetic networks and persistent homology, to investigate horizontal evolution \cite{Chan_2013,Doolittle,Moret_04_TCBB,Nakhleh_13_TREE}. A phylogenetic network is a rooted acyclic graph in which internal nodes are each of either indegree one and outdegree greater than one or 
outdegree one and indegree greater than one. The latter are called reticulations and are used to model reticulation events. The computational and mathematical aspects of phylogenetic networks
have been extensively studied over the past two decades (e.g. \cite{Gusfield_13_book,Huson_2010_book,Parida_JCB,Steel_16_Book,Wang_2001}).

On one hand, phylogenetic networks are very useful for dating
and inferring reticulation events \cite{Skog_Nature_15,Szohosi_2015}. On the other hand, it is extremely challenging to reconstruct
network models correctly from sequence data or from
gene trees \cite{huson2009computing,Yu_PNAS}.
Given that the phylogenies of numerous gene families and species have been
studied, phylogenetic networks are often reconstructed
and validated by examining their relationships with
existing gene trees and well-established clades \cite{Cardona_09_TCBB, Kanj_08_TCS}. The cluster containment problem (CCP) and the tree containment problem (TCP) thus arise  from study along these lines. The TCP asks whether or not a phylogenetic network displays a phylogenetic tree, whereas the CCP is about  determining whether or not a set of taxa is a cluster at a node in some phylogenetic tree displayed in a phylogenetic network.

The CCP and TCP are both NP-complete \cite{Kanj_08_TCS}, even for several special subclasses of binary phylogenetic networks \cite{Gambette_2016_BMB,van_Iersel_2010_IPL}.  Recently, good progress has been made on investigations into polynomial-time algorithms for solving the CCP and TCP on reticulation-visible networks \cite{Bordewich2016_AAM, gunawan2017decomposition, van_Iersel_2010_IPL}.  Most interestingly, the decomposition technique introduced in our work \cite{gambette2015solving, gunawan2017decomposition} and the visibility property lead to linear-time  algorithms for reticulation-visible  networks and nearly stable  networks \cite{gambette2015solving, Weller_17_ArXiv}, as well as fast algorithms for nonbinary networks \cite{Gunawan_16_ECCB,Lu_17_APBC}. Of note, the visibility property was first introduced by Huson and coworkers  to study phylogenetic networks \cite{Huson_2010_book}. An equivalence of this property appeared in a work of  
Lengauer and Tarjan on  flow analysis and program optimization\cite{Lengauer_79_ACM}.

In this paper, we develop fast methods for solving the CCP in arbitrary networks.    First, using the so-called small version of the CCP, namely determining whether a network displays a cluster at  a given tree node, we derive  a surprisingly  simple linear-time reduction from the CCP to the SAT problem. This implies an $O\left(t(n)2^{0.415n_r}\right)$ upper bound for the CCP, where 
$n_r$ is the number of reticulation nodes and $n$ is the number of nodes in the input network. This is mainly theoretically interesting, as $t(n_r)$ is a large sub-exponential function. We then present two practical methods. The first CCP algorithm is structurally very simple.  The second is designed through the small version of the CCP and bit more sophisticated. It has $O\left(n 2^{0.56 \psi}\right)$ time complexity, where $\psi$ is the number of invisible tree components in the given network. Our validation test shows that these two programs are much faster than what the worst-case time complexity suggests and also much faster than a method appearing in \cite{Lu_17_APBC}.

\section{Basic concepts and notations} \label{Sec:2}

\subsection{Phylogenetic networks}

\textit{Phylogenetic networks} over a set of taxa are acyclic rooted directed graphs in which the root is the unique node of indegree 0,  the nodes of outdegree 0 (called leaves) are labeled bijectively with taxa and each node is of either indegree one or outdegree one (Figure~\ref{Fig_CCP_example}). In this work, we simply call them networks. 

In a network, 
a node is a \textit{tree node} if it is of indegree one. Of note, leaves are tree nodes. For the sake of convenience, we add an incoming edge with an open end to the root to make it a tree node.
A non-root node is a \textit{reticulation} if it is of indegree greater than one and outdegree one. A reticulation is said to be at the {\it front}  if its child is a network leaf.  Note that the indegree of each reticulation might be more than two.  A network is \textit{binary} if every node is of total degree either three or one. 
For a network $N$, we use 
$\rho_N$ to  denote its root,
$\mathcal{V}(N)$ for its set of nodes, 
$\mathcal{E}(N)$ for its set of edges, 
$\mathcal{T}(N)$ for its set of nonleaf tree nodes (including the root), 
$\mathcal{R}(N)$ for its set of reticulations, and
$\mathcal{L}(N)$ for its set of leaves.


A node $u$ is a \textit{parent} of a node $v$, or $v$ is a child of $u$, if  $(u,v)\in \mathcal{E}(N)$. Two nodes are \textit{siblings} if they are the children of a common node. 
In general, a node $v$ is \textit{below} (or a \textit{descendant} of) a node $u$  if there is a path from $u$ to $v$.

A node $v$ is \textit{visible} on a leaf $\ell$ if every path from the root to $\ell$ contains $v$.
A node is visible if it is visible on some leaf, and is invisible otherwise. Note that front reticulations are visible and that the network root is visible on every leaf. 

The network $N[u]$ is defined as the network with the node set $\{v \in \mathcal{V}(N) : v \text{ is below } u\}$ and the edge set $\{(x,y)\in \mathcal{E}(N) : x \text{ and }y \text{ are below } u\}$. For a set of nodes $V$ and a set of edges $E$, $N - V - E$ is the digraph with the node set $\mathcal{V}(N)\setminus V$ and the edge set $\{(u,v) \in \mathcal{E}(N)\setminus E : u,v \notin V\}$. 

\begin{figure}
    \centering
    \includegraphics[scale = 0.9]{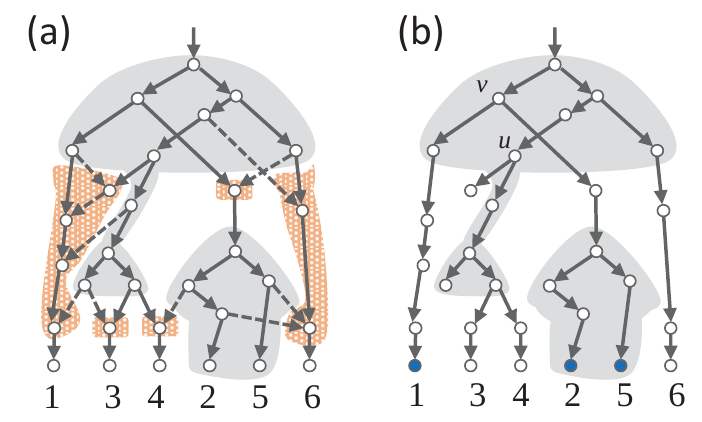}
    \caption{ (\textbf{a}) A network with six leaves with two nontrivial tree components (grey) and four trivial tree components $\{1\}, \{3\}, \{4\}, \{6\}$. There are  five reticulation components (orange and pattern fill).  For each reticulation, only one incoming arc is drawn as a solid arrow and the others as dashed arrows. 
    (\textbf{b}) Removing the dashed edges from the network in (a) leads to a spanning subtree of the network, in which tree node $u$ represents the cluster $\{3, 4\}$, whereas $v$ represents $\{1, 2, 5\}$.
}
    \label{Fig_CCP_example}
\end{figure}

\subsection{Tree components and visibility} \label{decompsection}

Let $N$ be a network.
Removing all reticulations from  $N$ yields a forest $N - \mathcal{R}(N)$ in which each connected component is a rooted subtree consisting of tree nodes only (Figure~\ref{Fig_CCP_example}a). 
Each connected component of the forest  is called a \textit{tree component} of  $N$  (see \cite{gunawan2017decomposition}). Similarly, removing all tree nodes from  $N$ yields a forest $N - \mathcal{T}(N)$ in which each connected component is an upside-down rooted subtree. 
 Each connected component of this forest  is called a \textit{reticulation component}, in which each directed edge points to the root of the component (Figure~\ref{Fig_CCP_example}a).

\begin{theorem}[Decomposition theorem, {\sc (\cite{gunawan2017decomposition})}]  \label{decomp}
Let $N$ be a network, and let  $K_0, K_1, \ldots, K_q$ be the tree components of $N$. In this case:
\begin{itemize}
\item[(1.)]  $\mathcal{T}(N) = \uplus_{i=0}^q \mathcal{V}(K_i)$, where $\uplus$ denotes the union of some disjoint sets.
  \item[(2.)]  The number of tree components is equal to one plus the number of reticulation components.
\end{itemize}
\end{theorem}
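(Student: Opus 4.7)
Part (1) is essentially a restatement of how tree components are defined: by construction, the $K_i$ are the connected components of the forest $N-\mathcal{R}(N)$ obtained by deleting all reticulations, and the vertex set of this forest is exactly the set of tree nodes of $N$. Because the vertex sets of the connected components of any graph partition the whole vertex set, the disjoint union expression in (1) is immediate.

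For (2), my plan is to double-count the edges of $N$ whose head is a tree node. Write $n_t$ for the total number of tree nodes of $N$ (including leaves), $q+1$ for the number of tree components, and $p$ for the number of reticulation components. On the one hand, every non-root tree node has indegree exactly one while the root has indegree zero, so the count is $n_t-1$. On the other hand, each such edge either has both endpoints in a single tree component -- in which case it is an edge of the forest $N-\mathcal{R}(N)$ -- or it enters a tree node from some reticulation; I will call the latter an \emph{exit edge} of the reticulation component containing its tail. Using part (1), each $K_i$ is a tree on $|\mathcal{V}(K_i)|$ nodes, so the edges of the first kind number $\sum_{i=0}^{q}(|\mathcal{V}(K_i)|-1)=n_t-(q+1)$; hence the exit edges number exactly $q$.

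The main step is then the claim that every reticulation component contributes exactly one exit edge. Once this is established, the total number of exit edges equals $p$, so $q=p$ and the desired equality (number of tree components)$\,=p+1$ follows. To prove the claim, fix a reticulation component $R$. Each of its $|R|$ nodes has outdegree $1$ in $N$, and each of those outgoing edges is either internal to $R$ or is an exit. The internal edges form an acyclic subgraph (as a subgraph of the acyclic graph $N$) in which every node of $R$ has outdegree at most one; therefore at least one node of $R$ has no internal outgoing edge, producing at least one exit. Conversely, the weak connectedness of $R$ forces at least $|R|-1$ internal edges, so there can be at most one exit. I expect this structural fact about reticulation components to be the only non-routine ingredient; the surrounding edge count is essentially the handshake lemma applied to the two sides of the bipartition into tree and reticulation components.
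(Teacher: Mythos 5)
Your argument is correct, and both halves check out: part (1) really is just the observation that the $K_i$ are by definition the connected components of the forest $N-\mathcal{R}(N)$ (note that for the partition to hold you must read the vertex set as \emph{all} tree nodes, leaves included, which is what the forest gives you, even though the paper's symbol $\mathcal{T}(N)$ is nominally declared to exclude leaves); and your double count for part (2) is sound, with the key lemma---each reticulation component has exactly one outgoing edge into a tree node---proved correctly from ``outdegree one'' plus acyclicity (at least one exit, via a sink of the internal edges) and weak connectedness (at most one exit, since $|R|-1$ of the $|R|$ outgoing edges must stay internal). This paper does not actually prove the theorem; it imports it from the cited decomposition paper, where the count in (2) is obtained more directly by a bijection: the root (sink) of each reticulation component is a reticulation whose unique child is a tree node, and that tree node is the root of a tree component, giving a one-to-one correspondence between reticulation components and the non-root tree components. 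Your handshake-style argument buys the same conclusion at the cost of some extra bookkeeping (you need each $K_i$ to be a tree, hence the edge count $n_t-(q+1)$), while your ``exactly one exit edge'' lemma is essentially the bijection in disguise---indeed, once you have it, you could skip the global edge count entirely and just match each reticulation component to the tree component rooted at the head of its unique exit edge, which is also the form of the statement that the decomposition-based algorithms in this line of work actually use.
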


A tree component is \textit{trivial} if it is a component consisting of the leaf child of a front reticulation. It is \textit{exposed} if it is nontrivial and all the components below it, if any,  are trivial.

A component is \textit{visible} if its root is visible on a leaf; it is {\it invisible} otherwise. Of note, the component rooted at the network root is always visible. The number of invisible  tree components in a network is called the {\it invisibility number} of the network.

Let $v$ be  a reticulation in $N$.  It is {\it inner} if all of its parents are in the same tree component and  
{\it cross} otherwise.  A reticulation $v$ is {\it below} a tree component $K$ if the reticulation has a parent in $K$.
  A reticulation component $K'$ is {\it below} a tree component
$K$ if $K'$ contains a reticulation below $K$.

A network is said to be {\it reduced} if each reticulation component consists of a single reticulation, i.e., the child of every reticulation is a tree node.

\subsection{The cluster containment problem}

A subset of taxa is called a \textit{cluster}. For a tree $T$ and a node $v$ in $T$, the subset of the labels mapped to the leaves in the subtree $T[v]$ is said to be a cluster of $T$ (Figure~\ref{Fig_CCP_example}b).  
A network $N$ \textit{displays} a  cluster $B$ if there is a spanning subtree $T$ of $N$ such that $B$ is a cluster of $T$. 
The CCP for networks is formally defined:

\begin{quote}\smallskip
\textit{Instance}: A network $N$ over $X$ and a subset $B$ of $X$.\\
\textit{Question}: Does $N$ display $B$? \smallskip
\end{quote}
In studies on the CCP, it is particularly convenient not to distinguish the leaves in $B$ (and those not in $B$) by simply coloring the leaves in $B$ with blue and the rest red. In this way, the CCP becomes to determine whether or not a network with leaves colored blue and red displays the set of blue leaves. We will work on this version in the rest of the paper.

\section{A simple CCP algorithm} 

\subsection{A reduction lemma}

 Let $K$ be an exposed tree component in $N$.  For a reticulations $s$ below $K$, we define: 
 \begin{eqnarray}
    E_{{in}}^{K}(s) &= \{ (u,s)\in \mathcal{E}(N) : u \in \mathcal{V}(K)  \}, \label{def_in_edges}\\
    E_{{out}}^{K}(s) &= \{ (u,s)\in \mathcal{E}(N) : u \notin \mathcal{V}(K) \}. \label{def_out_edges}
 \end{eqnarray}
 \noindent 
  We further use $R_{b}(K)$ and $R_{r}(K)$ to denote the set of  {\bf cross} front reticulations below $K$ whose children are blue or red leaves, respectively. That is:
\begin{eqnarray}
R_{b}(K)= \{ s\in {\cal R}(N): \mbox{ the child of $s$ is a blue leaf such that } E_{{in}}^{K}(s)\neq \emptyset \neq E_{{out}}^{K}(s)\};\\
R_{r}(K)= \{ s\in {\cal R}(N): \mbox{ the child of $s$ is a red leaf such that } E_{{in}}^{K}(s)\neq \emptyset \neq E_{{out}}^{K}(s)\}.
\end{eqnarray}
By merging all the blue leaves below into $K$ and sending all the red leaves away, we obtain the following subnetwork of $N$ (Figure~\ref{Fig3}):
 \begin{eqnarray}
   N^{K}_{{blue}} = N - \cup_{s\in R_b(K)} E_{{out}}^{K}(s) - \cup_{s\in R_r(K)} E_{{in}}^{K}(s). 
   \label{in_subnetwork}
 \end{eqnarray}
 Similarly, by merging all the red leaves below into $K$ and sending  all the blue leaves away, we obtain:
 \begin{eqnarray}
   N^{K}_{{red}} = N - \cup_{s\in R_r(K)} E_{{out}}^{K}(s) - \cup_{s\in R_b(K)} E_{{in}}^{K}(s). 
   \label{out_subnetwork}
 \end{eqnarray}

 \begin{figure}[b!]
    \centering
    \includegraphics[scale=0.9]{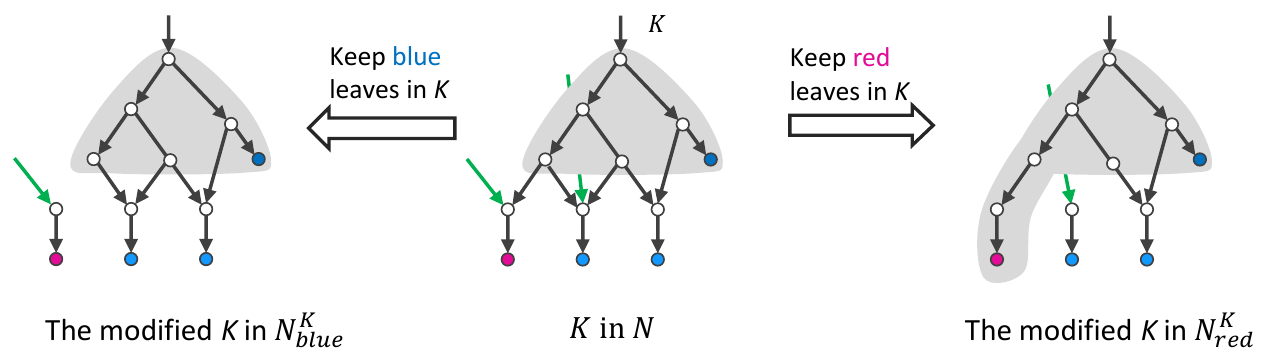}
    \caption{ Illustration of  $N^{K}_{{blue}}$ and $N^{K}_{{red}}$ as defined in Eqn. (\ref{in_subnetwork}) and (\ref{out_subnetwork}), respectively. Here,  green incoming edges are from other tree components  for the two front cross reticulations below $K$. Since $K$ is visible on a blue leaf in middle, its reticulation parent is still below $K$ after modification even in $N^{K}_{red}$.  }
    \label{Fig3}
 \end{figure}

\begin{lemma} \label{instancesplit}
Let $N$ be a network with leaves colored blue and red, and let $B$ denote the set of blue leaves. For an exposed tree component $K$,  $N$ then displays $B$ if and only if either $N^{K}_{{blue}}$  or $N^{K}_{red}$  displays it.
\end{lemma}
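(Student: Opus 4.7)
The ($\Leftarrow$) direction is straightforward: both $N^{K}_{{blue}}$ and $N^{K}_{{red}}$ are obtained from $N$ by deleting only certain incoming edges of reticulations in $R_b(K)\cup R_r(K)$, and by the definitions of $R_b(K)$ and $R_r(K)$ each affected reticulation retains at least one incoming edge; consequently any spanning subtree of either subnetwork (after suppressing any degree-one internal dead end left behind) is also a spanning subtree of $N$ with the same leaves and the same ancestor-descendant relations, so any cluster it displays is displayed by $N$ as well.

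For the ($\Rightarrow$) direction, let $T$ be a spanning subtree of $N$ in which $B$ is the cluster at a tree node $v$ (pushing $v$ down through reticulations if needed) and write $r_K$ for the root of $K$. The pivotal observation I plan to use is that, by the very definition of $B$, every $s\in R_b(K)$ lies below $v$ in $T$ (its blue leaf-child belongs to $B$), while every $s\in R_r(K)$ does not lie below $v$ in $T$ (its red leaf-child is outside $B$); so for each such $s$ the ``side'' of $K$ on which its chosen parent in $T$ sits is already constrained by $v$'s position relative to $r_K$.

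I would then split into two cases. If $v=r_K$ or $v$ is a descendant of $r_K$ in $T$, exposedness of $K$ (all tree components below $K$ are trivial) forces $v\in K$, and I target $N^{K}_{{blue}}$: I build $T'$ by redirecting the chosen incoming edge of every $s\in R_b(K)$ whose parent in $T$ is outside $K$ to an element of $E_{{in}}^{K}(s)$ lying in the subtree of $v$ inside $K$, and symmetrically redirecting every $s\in R_r(K)$ whose parent in $T$ is inside $K$ to an element of $E_{{out}}^{K}(s)$. Otherwise (when $v$ is a strict ancestor of $r_K$ in $T$ or is incomparable to $r_K$), I target $N^{K}_{{red}}$ and perform the mirror rerouting. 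In both subcases I verify that $T'$ still displays $B$ leaf by leaf: only reticulations in $R_b(K)\cup R_r(K)$ are altered, and because each such reticulation is a front reticulation its subtree in $T$ is just a single leaf, so only the ancestor-descendant relation of that leaf with $v$ (or with $r_K$) needs to be checked.

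The hardest step will be the first case: guaranteeing that the required in-$K$ parent inside $v$'s subtree actually exists for every $s\in R_b(K)$ whose original parent in $T$ lay outside $K$. My intended argument is that if no such in-$K$ parent exists for some $s$, then $T$'s original out-of-$K$ choice for $s$ is in fact forced, and I switch strategies: I target $N^{K}_{{red}}$ instead, keep this outside parent of $s$, and repair only the $R_r(K)$ reticulations symmetrically. Exposedness of $K$ is used crucially throughout, since it guarantees that every leaf reachable from $K$ in $N$ is either already inside $K$ or lies directly under a front reticulation, so the cross-front reroutings described above are the only mechanism that could move a leaf into or out of the subtree of $v$ in $T'$.
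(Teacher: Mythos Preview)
Your case split for the $(\Rightarrow)$ direction is wrong in one sub-case. When $v$ is a strict ancestor of $r_K$ in $T$ you target $N^{K}_{red}$, but this can fail outright. Since $r_K$ lies below $v$ in $T$, every $s\in R_r(K)$ must have its $T$-parent outside $K$: an in-$K$ parent would place $s$ (and its red leaf child) below $r_K$, hence below $v$. But $N^{K}_{red}$ deletes precisely $E^{K}_{out}(s)$ for $s\in R_r(K)$, so every spanning tree of $N^{K}_{red}$ routes such an $s$ through $K$, dragging the red leaf under $r_K$. If, moreover, every path from $\rho_N$ to $r_K$ passes through $v$ (for instance, when the reticulation parent of $\rho_K$ is an inner reticulation of the tree component containing $v$), that red leaf lies below $v$ in \emph{every} spanning tree of $N^{K}_{red}$, and one checks easily that no other node can serve as a witness either; so $N^{K}_{red}$ does not display $B$ at all. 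The correct dichotomy, as in the paper, is governed by whether $\rho_K$ is below $v$ in $T$: if $v$ is above $\rho_K$ \emph{or} $v\in K$, one targets $N^{K}_{blue}$; one targets $N^{K}_{red}$ only when $v$ is incomparable to $\rho_K$. Your treatment of $v\in K$ and of the incomparable sub-case is fine; the strict-ancestor sub-case must go to $N^{K}_{blue}$ as well.

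A smaller point: your ``hardest step'' worry in the case $v\in K$ is vacuous. Because $K$ is exposed, descending from $v$ in $T$ one can only reach nodes of $K$, front reticulations whose $T$-parent lies in $K$, and their leaf children; one can never reach a tree node of another component. Hence for $s\in R_b(K)$ (which is below $v$ in $T$) the $T$-parent is automatically in $K$ and below $v$, so no rerouting of $R_b(K)$ is needed in this case and your fallback is never triggered. The only rerouting required when $v\in K$ is the one you describe for $R_r(K)$, sending in-$K$ parents to $E^{K}_{out}(s)$, exactly as the paper does.
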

\begin{proof} ({\bf Necessity})  It is obvious, as  $N^{K}_{{blue}}$ and $N^{K}_{{red}}$ are both subnetworks of $N$.
  
({\bf Sufficiency})  We assume that $K$ is an exposed tree component in $N$ in which $B$ is displayed. Therefore,  a \textbf{spanning} tree $T$ of $N$ exists such that $B$ is the cluster at a node $v$ in $T$. 
For each cross reticulation $s$ below $K$, we select an edge $e_{{in}}(s)$ from $E_{{in}}^{K}(s)$ and an edge $e_{out}(s)$ from $E_{{out}}^{K}(s)$.
  Let $\rho_K$ be the root of $K$. Since $T$ spans $N$,   $T$ contains all the nodes of $N$. We consider the following cases.
  
  If  $v$ is above  $\rho_K$ in $T$, only blue leaves can be found below $v$. This implies that  
 none of the edges in $E_{{in}}^{K}(s)$ is in $T$ for each $s\in R_r(K)$. However, for
each cross reticulation $r \in R_b(K)$,  $T$ contains exactly an edge in either $E_{{in}}^{K}(s) $ or
$E_{{out}}^{K}(s) $ exclusively.   We define the subgraph $T'=(\mathcal{V}(N), \mathcal{E}(T'))$ of $N^{K}_{blue}$ as: 
  \begin{eqnarray}
   && \mathcal{E}(T') = \left(\mathcal{E}(T) -  \cup_{s\in R_b(K)} E_{{out}}^{K}(s) \right)
     \cup  \left\{ e_{{in}}(s) \;|\; s \in R_b(K) \text{ s.t. } \mathcal{E}(T) \cap E_{{out}}^{K}(s)
 \neq \emptyset  \right\},  \nonumber 
  \end{eqnarray}
%
  %
  %
  In other words, $T'$ is obtained from $T$ by replacing any edge belonging to $ E_{{out}}^{K}(s)$
  with the selected edge entering $s$ from $K$ for each $s\in R_b(K)$.
  Since the indegree of each reticulation below $K$ remains one,    $T'$ is also a spanning subtree of $N^{K}_{{blue}}$ and $v$ has the same cluster in $T'$ as in $T$. This implies that  
$N^{K}_{{blue}}$ displays $B$ via $T'$.
  
  If $v$ is in $K$,  it is also possible for $T$ to contain an edge in $E_{{in}}^{K}(s) $ or
$E_{{out}}^{K}(s) $ for a reticulation $s \in R_r(K)$ if $s$ has a parent in $K$ but not below $v$. However, the incoming edge of a reticulation $s \in R_b(K)$ in $T$ must be an edge in $E_{{in}}^{K}(s)$ as $v$ is below $K$. In this case, $B$ is the cluster at $v$ in the spanning tree $T'$ of $N^{K}_{blue}$, defined as:
\begin{eqnarray}
   && \mathcal{E}(T') = \left(\mathcal{E}(T) -  \cup_{s\in R_r(K)} E_{{in}}^{K}(s)\right)
     \cup 
    \left\{ e_{{out}}(s) \;|\; s \in R_r(K) \text{ s.t. } \mathcal{E}(T) \cap E_{{in}}^{K}(s) \neq \emptyset  \right\}. \nonumber
  \end{eqnarray}

 If $v$ is incomparable with $\rho_K$, then no blue leaf will be found below $\rho_K$  in $T$. This implies that:
  $$\cup_{s\in R_b(K)} E_{\mbox{in}}^{K}(s) \cap \mathcal{E}(T) =\emptyset. $$
  
 
 In this case, we define  $T''=(\mathcal{V}(N), \mathcal{E}(T''))$, where:
 \begin{eqnarray}
 \label{Eqn_T''}
\mathcal{E}(T'')= \left(\mathcal{E}(T) - \cup_{s\in R_r(K)} E_{{out}}^{K}(s)\right)
 \cup  
 \left\{ e_{{in}}(s) \;|\; s \in R_r(K) \text{ s.t. } \mathcal{E}(T) \cap E_{{out}}^{K}(s) \neq \emptyset  \right\}.  \nonumber 
\end{eqnarray}
Hence, $T''$ is obtained from $T$ by replacing any edge belonging to $E_{{out}}^{K}(s)$
  with the selected edge entering $s$ from $K$ for each $s\in R_r(K)$.  Note that  $T''$ is a spanning tree of $N^{K}_{{red}}$. This implies that  $N^{K}_{{red}}$ displays $B$ via $T''$.
\end{proof}

  \subsection{A simple but effective approach}

Let $N$ be a network with the leaves being colored blue and red, which is not necessarily binary.
By Lemma 1, determining whether $N$ displays the cluster $B$ consisting of blue leaves can be done by
(1) selecting an exposed tree component $K$ and (2) recursively determining whether  $B$ is displayed by either $N^{K}_{blue}$ or $N^{K}_{red}$.  This simple algorithm is illustrated in Figure~\ref{Fig4}. It turns out to be asymptotically better than the first non-trivial CCP algorithm  developed by us in \cite{Lu_17_APBC}, as showed by our analysis in Appendix A.

Note that if  the selected tree component $K$ is not visible on red (resp. blue) leaves, only blue (resp. red) leaves are found in $\bar{K}$   in  $N^{K}_{blue}$ (resp. $N^{K}_{red}$). 
 Let $r$ be the reticulation parent of $\rho(K)$. 
 If $\bar{K}$ contains only blue  (resp. red) leaves,  we will replace $\bar{K}$ with a blue (resp. red)  leaf  so that $r$ will become a front reticulation with a blue (resp. red) leaf child  in $N^{K}_{blue}$ (resp. $N^{K}_{red}$) in the next step.  If $\bar{K}$ does not contain any colored leaves, it will be removed, together with $r$ and the edges entering $r$. 
The resulting network will be denoted as $\mbox{Update}(N^{K}_{blue})$
(resp. 
$\mbox{Update}(N^{K}_{red})$). With this notation, we present our first CCP algorithm  in Appendix A, where its time complexity is also analyzed.

\begin{figure}[b!]
    \centering
    \includegraphics[scale=0.9]{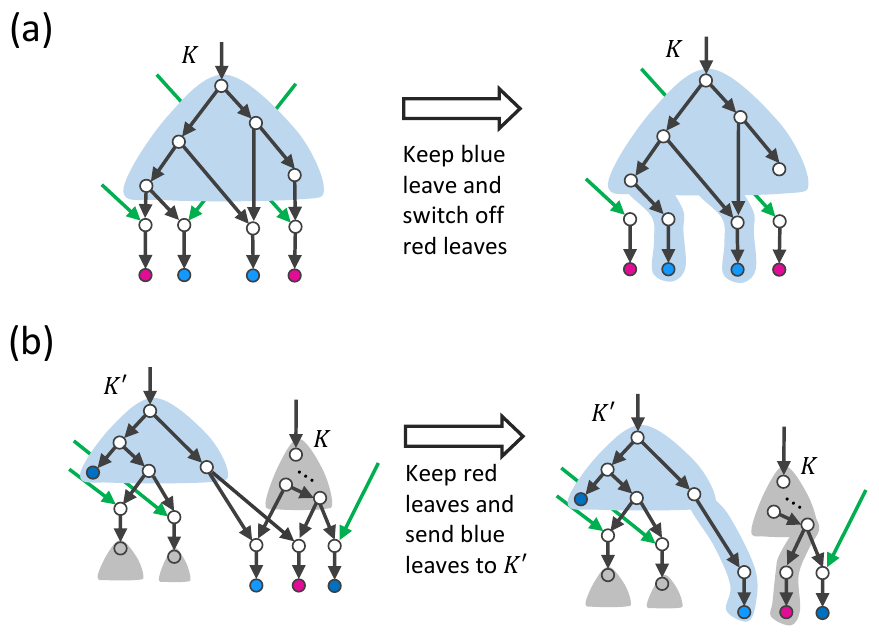}
    \caption{Two possible cases on 
     an exposed tree component $K$ for which only simplification is needed. 
     ({\bf a}) $K$ is visible.  It is simplified by switching off the red leaves if $K$ is visible on blue and switching off the blue leaves otherwise. 
     ({\bf b}) $K$ is invisible but adjacent to a tree  component $K'$ whose color status is known.
     In this figure, $K'$ is assumed to be visible on blue leaves. $K$ is then simplified by sending each network leaf $\ell$
     under both tree components to $K'$ if $\ell$ is blue and keeping it otherwise.
    }
\label{Figure06}
\end{figure} 

\section{Reduction to The SAT Problem}
\label{reduction_sec}

 We now consider the following simple version of the CCP.

\begin{quote}\smallskip
 {\bf The Small Cluster Containment Problem} (SCCP)\\
 {\it Instance}: A network $N$ with colored leaves $X$ and a tree node $v\in {\cal  T}(N)\backslash{\rho(N)}$.\\
 {\it Question}: Does $N$ display the cluster $B$ of blue leaves at $v$?
\end{quote}\smallskip

In the instance part of the SCCP, we assume that $v$ is a tree node that is different from ${\rho(N)}$, the network root.  The reason for this is because (i) $X$ is the only cluster displayed  at ${\rho(N)}$ in $N$ and  
(ii) a cluster is displayed at a reticulation node if and only if it is displayed at its child.  We further assume that $v$ is the root of a tree component in $N$. Otherwise, if $v$ is in a tree component $K$ but not $\rho(K)$, we may construct a network $N'$ from $N$ by adding a tree node $t$ in the edge entering $\rho(K)$, a reticulation node $r$ in the edge entering $v$ and an edge $(t, r)$. It is easy to see that $N$ and $N'$ display the same set of soft clusters at $v$.  These two assumptions lead to an efficient reduction to the SAT problem and another better CCP algorithm.

Assume that $N$ displays blue leaves at $v$ via a spanning tree $T_b$. 
Since $T_b$ spans $N$, it contains all the nodes in $N$ and is obtained by removal of  all but one edge entering $r$ from each $r\in {\cal R}(N)$ (Figure~\ref{Fig_CCP_example}b).  Each tree component $K$, therefore,  becomes a subtree of $T_b$. This implies that  it is either entirely below $v$ or disjoint from the subtree consisting of vertices below $v$ in $T_b$ under the assumption that $v$ is the root of a component.  

 Let $C_{T_b}(x)$ denote the cluster at a vertex $x$ in $T_b$. 
If a tree component $K$ is below $v$ in $T_b$, $C_{T_b}(\rho (K))$  comprises  only blue leaves if it is nonempty. Otherwise, $C_{T_b}(\rho (K))$ comprises red leaves if it is nonempty.  Therefore, 
$T_b$  induces a blue or red coloring of  all the tree components with the following properties:

\begin{itemize}
  \item The tree component $K_v$ rooted at $v$ must be blue and so are the tree components containing blue leaves.
   \item The tree component $K_{\rho(N)}$ containing $\rho(N)$,  those containing red leaves or not below $v$ must be red.
   \item Let $K$ be a tree component and let the parent of $\rho(K)$ be $r_K\in {\cal R}(N)$ such that $K_{\rho(N)}\neq K\neq K_{v}$.   If $K$ is blue,   $T_b$ contains  an edge $(u, r_k)$, where $u$ is a tree-node  in another blue tree component $K'$ above $K$.  Similarly, $T_b$ contains an edge from a tree node in a red tree component to $r_K$ if $K$ is red.  
\end{itemize}
Conversely, this coloring is sufficient to demonstrate that $N$ displays blue leaves at $v$.
These observations lead to a linear reduction from the SCCP to the SAT problem, which is formally proved in Appendix B.

\begin{theorem}
\label{SAT_reduction}
The SCCP on reduced  networks can be linearly reduced to the SAT problem in such a way  that a SCCP  instance $Q: (N, v)$  is transformed into a SAT instance $S_Q$ which contains 
$|\chi(N)|$ variables and the number of terms in each clause is bounded by one plus the maximum indegree of a reticulation node in $N$, where $\chi(N)$ is the number of tree components in $N$. 
\end{theorem}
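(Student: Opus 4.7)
The plan is to introduce one Boolean variable $x_K$ for each tree component $K \in \chi(N)$, interpreting $x_K = 1$ as ``$K$ is colored blue'' in the coloring induced by a putative spanning tree displaying $B$ at $v$. This immediately accounts for all $|\chi(N)|$ variables. I then encode the three bulleted observations listed just above the theorem: as unit clauses, I add $x_{K_v}$ and $x_K$ for every $K$ containing a blue leaf, as well as $\neg x_{K_{\rho(N)}}$ and $\neg x_K$ for every $K$ containing a red leaf or unreachable from $v$ in $N$. For each remaining $K$, let $r_K$ be the reticulation parent of $\rho(K)$ (unique because $N$ is reduced, which also forces every parent of every reticulation to be a tree node and hence to lie in some tree component), and let $K_1,\dots,K_m$ be the tree components containing the parents of $r_K$. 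The third bullet translates into the two disjunctive clauses
\begin{equation*}
\neg x_K \lor x_{K_1} \lor \cdots \lor x_{K_m}, \qquad x_K \lor \neg x_{K_1} \lor \cdots \lor \neg x_{K_m},
\end{equation*}
each of size at most $1+m$, hence at most one plus the maximum indegree of a reticulation in $N$.

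For the necessity direction of correctness, the paper's own discussion is essentially the argument: given a spanning tree $T_b$ witnessing $B$ at $v$, setting $x_K = 1$ exactly when $K$ lies entirely below $v$ in $T_b$ satisfies every clause. For sufficiency, from any satisfying assignment I build a candidate spanning tree by retaining every tree-to-tree edge of $N$ and, for each reticulation $r_K$, selecting one incoming edge whose source lies in a tree component of the same color as $K$; the active disjunctive clause (together with the unit clauses forcing $x_{K_v}$ and $\neg x_{K_{\rho(N)}}$) guarantees that such an edge exists. Because every non-root node then has exactly one retained incoming edge and $N$ is acyclic, the resulting subgraph is a spanning subtree.

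The main obstacle is to verify that the cluster at $v$ in this spanning tree is exactly $B$. For the containment $B \subseteq$ cluster at $v$, given any blue component $K$, iterating the ``selected parent'' relation produces a chain $K = K_0, K_1, K_2, \dots$ of blue components in which each $K_{i+1}$ is strictly above $K_i$ in the DAG order of $N$; since the chain cannot enter the red $K_{\rho(N)}$ and must terminate by strict monotonicity, it terminates at $K_v$, yielding a directed path in the spanning tree from $v$ down to every leaf of $K$. A symmetric chain argument rules out any red component being reachable from $v$ in the spanning tree, as otherwise chasing the chain of same-color selected ancestors from that component would land on the blue $K_v$, a contradiction. Counting then delivers $|\chi(N)|$ variables, $O(|\chi(N)|)$ clauses, and linear construction time in $|N|$, completing the reduction.
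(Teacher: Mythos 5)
Your encoding is the same in spirit as the paper's (one variable per tree component, two ``connectivity'' clauses per reticulation, unit clauses forcing the colors of $K_v$, the root component, the leaf-bearing components and the components unreachable from $v$), but there is a genuine gap in \emph{which} components receive the two disjunctive clauses. You attach them only to ``each remaining $K$'', i.e.\ to components that did not receive a unit clause. That omission is fatal: the clauses $\neg x_K \lor x_{K_1}\lor\cdots\lor x_{K_m}$ and $x_K\lor \neg x_{K_1}\lor\cdots\lor\neg x_{K_m}$ are needed precisely for the components whose color is already forced, because they are what ties a forced-blue component to a blue parent. Concretely, let $B$ contain a leaf $b$ lying in a component $K_2$ reachable from $v$, and let the (unique, since $N$ is reduced) reticulation parent $r_{K_2}$ of $\rho(K_2)$ have both of its parents in components $K_1,K_1'$ below $v$ that each contain a red leaf. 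Then $N$ does not display $B$ at $v$, and the paper's clause $C_{s2}$ for $s=r_{K_2}$ is violated by the forced assignment; in your formula, however, $K_2$, $K_1$, $K_1'$ all carry unit clauses, hence are not ``remaining'', no connectivity clause is present, and the assignment $x_{K_v}=x_{K_2}=1$, $x_{K_1}=x_{K_1'}=0$ satisfies everything, so your reduction answers YES on a NO-instance. The same omission breaks your sufficiency construction: for $r_{K_2}$ there is no incoming edge whose source lies in a blue component, so the step ``select one incoming edge whose source lies in a tree component of the same color as $K$'' cannot be carried out, and the chain argument can stall at a blue-leaf component other than $K_v$.

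The repair is exactly what the paper does: introduce the clause pair for \emph{every} reticulation reachable from $v$ (equivalently, for every tree component reachable from $v$ other than $K_v$), letting the unit clauses coexist with, not replace, the connectivity clauses; clauses attached to components not reachable from $v$ are harmless (all their parent components are also unreachable, hence forced false), but the pair must \emph{not} be attached to $K_v$ itself, since every parent of $r_{K_v}$ lies in a component unreachable from $v$ and forced red, so that clause would make every instance unsatisfiable. With the clause set corrected in this way, the rest of your argument (variable count $|\chi(N)|$, clause width at most one plus the maximum reticulation indegree, the necessity direction, and the spanning-tree construction with the upward chain of same-colored selected parents terminating at $K_v$) matches the paper's proof and goes through.
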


Theorem~\ref{SAT_reduction} has two implications.  In theory, since an instance of the SCCP involving a binary network is transformed into a 3-SAT instance, 
the SCCP and CCP have an upper bound $O\left(t(n)1.3334^{|{\cal R}(N)|}\right)$, which is about $O\left(t(n)2^{0.415|{\cal R}(N)|}\right)$,  using the best derandomization of Sch\"{o}ning's algorithm \cite{Moser_11_STOC}, where $t(n)$ is a sub-exponential function of the number $n$ of nodes in the input binary network.  It improves upon a known upper  bound $O\left(n2^{0.5|{\cal R}(N)|}\right)$ \cite{Kanj_08_TCS}. 
In practice, we can implement a computer program for both SCCP and CCP by just calling a popular SAT solver.

\section{A SCCP Algorithm}

In this section, we will work on reduced networks. For an arbitrary network, we may replace each reticulation component by a non-binary reticulation node such that the resulting network displays a cluster if and only if the original network displays it. 

\subsection{Technical lemmas}

Our SCCP algorithm is  a recursive procedure that colors tree components in a bottom-up manner to search for a coloring satisfying the properties listed in Section~\ref{reduction_sec}. In each step, we simplify the current network by either (i)  coloring at least one exposed tree component,  removing reticulation edges from  this newly colored one to front reticulations  with a different color and  contracting an empty tree component if any,  or (ii) recursively calling the procedure  on each coloring extension if there are two possible coloring extensions. For the sake of presentation,  we  define the following concepts and notation:
\begin{itemize}
 
\item Two tree components $K'$ and $K''$ are {\it adjacent} (via a front reticulation) if there is a front reticulation $r$ such that $r$ has a parent in each of the tree components. For example, $K_1$ and $K_2$ are adjacent in $N$ appearing in Figure~\ref{Fig4}.  Two adjacent tree components are said to be {\it neighbors} of each other.

\item Only front reticulations are found below an exposed tree component. A leaf is said to be below 
 an exposed tree component $K$, if it is the child of a front reticulation below $K$.  

\item
In each step, we may color a nontrivial tree component blue (resp. red) in the sense that only blue (resp. red) leaves are allowed to be added into this component. 
 Once  a red (resp. blue) leaf is detected in a blue (resp. red) tree component,  the current coloring extension will stop.  Furthermore, after a leaf is merged into an internal tree component $K$ with unknown color status,  $K$ will be colored with the leaf's color at the same time.   The color of a tree component will never be changed once it is assigned. Importantly, an internal colored tree component does not necessarily contain a leaf with that color when its color is assigned during the execution of the algorithm.

\item If an exposed tree component below $v$ is visible on a blue leaf,  it will be colored blue by the algorithm.   Similarly, an exposed tree component visible on a red leaf will be colored red by the algorithm.  Therefore,  the color status can be considered to be a generalization of the visibility concept for the tree components in this study.

\item Assume that some tree components are colored in $N$, in which we try to determine whether $B$ is displayed at $v$ or not. This partial coloring is {\it good} if $N$ displays $B$ at $v$ via a spanning tree in which  blue components are below $v$ and red ones are not below $v$, whereas uncolored tree components may or may not be below $v$.  
\vspace{0.5em}
\end{itemize}

\begin{figure}[t!]
    \centering
    \includegraphics[scale=1]{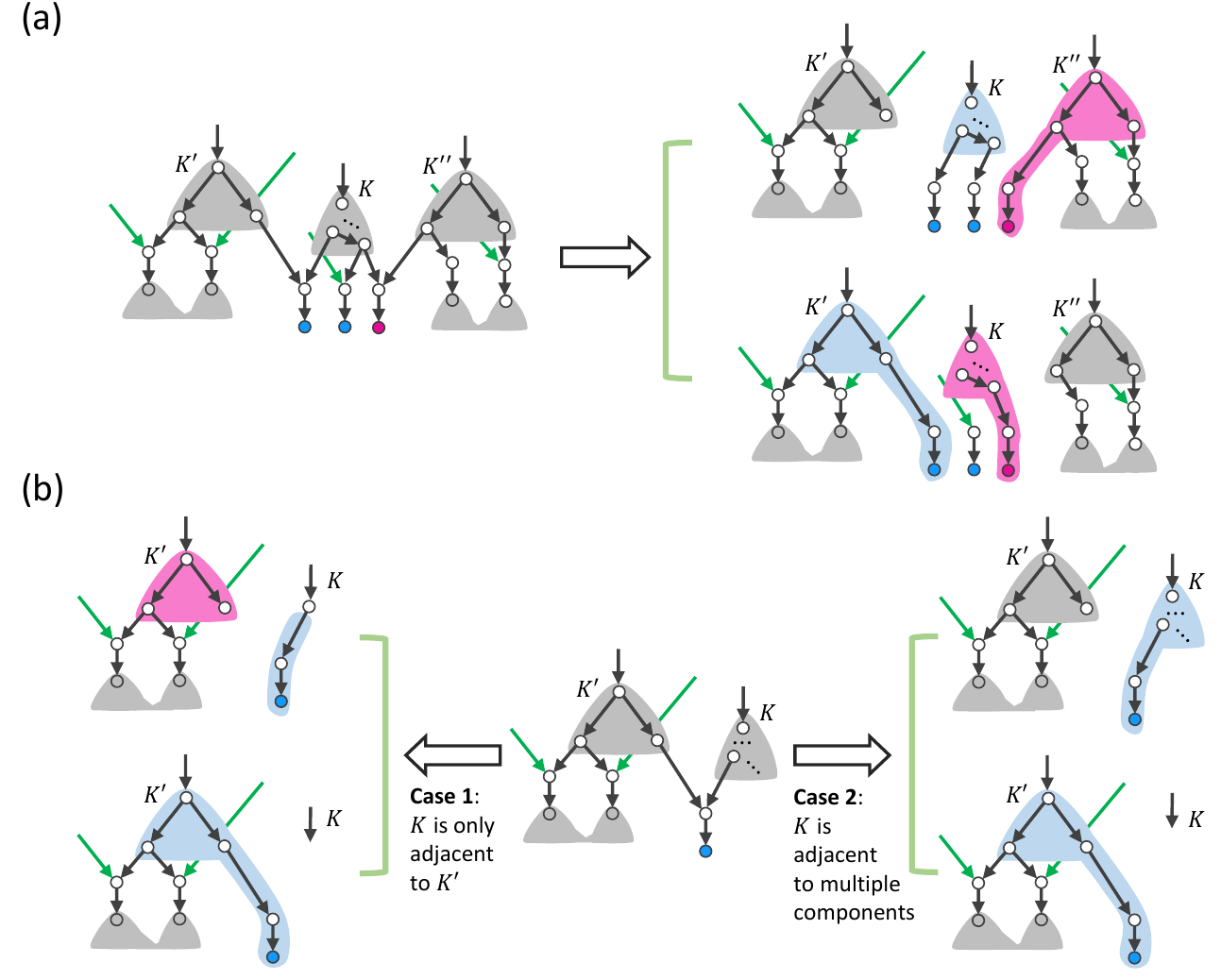}
    \caption{Two possible cases on 
     an invisible exposed tree component $K$ for which two recursive calls are needed. 
     ({\bf a}.) There are blue and red leaves below $K$. (1) Simplify $K$ by keeping blue leaves and switching off red leaves and contract $K$ into a blue leaf before making a recursive call. (2) Simplify $K$ by keeping red leaves and switching off blue leave and contract $K$ into a red leaf before making a recursive call. Here, $K$ may be adjacent only to one invisible tree component.   ({\bf b.}) The leaves below $K$ are of the same color. Assume that blue leaves are below $K$ as shown here.  We consider the two cases separately. If $K$ is adjacent to only one invisible tree component $K'$,  (1) set $K'$ to be red  and contract $K$ into a blue leaf before making a recursive call; and (2) set $K'$ to be blue and switch all the blue leaves to $K'$ before removing $K$ and making a recursive call.
     If $K$ is adjacent to two invisible tree components or more, 
     (1) keep the blue leaves in $K$ before contracting $K$ into a blue leaf and making a recursive call; and (2) switch off all blue leaves before removing $K$ and making a recursive call. 
     }
    \label{Figure07}
\end{figure}

Our SCCP algorithm is based on the following lemmas. 
Let $K$ be a selected  exposed tree component for color extension in the network $M$ derived from the input network when a step of the algorithm starts. We consider whether or not $K$ is visible and whether or not the neighbors of $K$ are colored in order to color $K$ and further simplify $M$. Let $B$ be the set of blue leaves in $M$.

\begin{lemma} \label{visible} Let $K$ be a visible and  exposed  tree  component  in $M$ (Figure~\ref{Figure06}a).  
    \begin{itemize}
    \item[(1.)] If $M$ displays blue leaves at $v$, $K$ cannot be visible on both blue and red leaves.
   \item[(2.)]  Define  $M'$ to be        
    $M^{K}_{{blue}}$  (Eqn.~(\ref{in_subnetwork})) if $K$ is visible on blue leaves and  $M^{K}_{{red}}$ (Eqn.~(\ref{out_subnetwork})) if $K$ is visible on red leaves.  $M$ displays $B$ at $v$ if and only if $M'$ displays $B$ at $v$.  
     \end{itemize}
\end{lemma}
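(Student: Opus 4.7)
The plan is to assume throughout that $M$ displays $B$ at $v$ via a spanning tree $T_b$ and to exploit the structural fact recalled just before Theorem~\ref{SAT_reduction}: since $v$ is the root of a tree component, every tree component of $M$ is in $T_b$ either entirely below $v$ or entirely disjoint from the descendants of $v$. The main lever on the exposed-ness of $K$ will be the following sub-claim: in any spanning tree of $M$, the only non-leaf tree nodes descending from $\rho(K)$ are the tree nodes of $K$ itself. This holds because from $\rho(K)$ one reaches (via forced tree-to-tree edges) exactly the tree nodes of $K$, then possibly some reticulations below $K$, and then their children which, by exposed-ness of $K$, lie in trivial tree components and are therefore leaves.

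For Part~(1), I argue by contradiction. If $K$ were visible on both a blue leaf $\ell_b$ and a red leaf $\ell_r$, then $\ell_b$ and $\ell_r$ both sit below $\rho(K)$ in $T_b$, while only $\ell_b$ sits below $v$. Since $\rho(K)$ and $v$ are both ancestors of $\ell_b$ in the tree $T_b$ they are comparable; and since $v$ is not an ancestor of $\ell_r$ but $\rho(K)$ is, $\rho(K)$ must be a strict ancestor of $v$ in $T_b$. Then $v$ is a non-leaf tree node strictly descending from $\rho(K)$, so by the sub-claim $v \in K$. Combined with $v = \rho(K_v)$ this forces $K = K_v$ and hence $\rho(K) = v$, contradicting strict ancestorship.

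For Part~(2), sufficiency is immediate since $M^{K}_{blue}$ and $M^{K}_{red}$ are spanning subgraphs of $M$. For necessity I split on the colour on which $K$ is visible; by Part~(1) exactly one case occurs. When $K$ is visible on a blue leaf, the Part~(1) argument forbids $\rho(K)$ being strictly above $v$ in $T_b$, while visibility on a blue leaf (necessarily below $v$) forces $\rho(K)$ to be weakly below $v$; hence $K$ is entirely below $v$ in $T_b$ and every leaf descending from $\rho(K)$ in $T_b$ is blue. Consequently, for each $s \in R_r(K)$ the chosen incoming edge of $s$ in $T_b$ must lie in $E^{K}_{out}(s)$, else its red child would descend from $\rho(K)$, and this edge survives in $M^{K}_{blue}$. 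I form $T_b'$ from $T_b$ by replacing, for each $s \in R_b(K)$ whose chosen edge lies in $E^{K}_{out}(s)$, that edge with an arbitrary edge of $E^{K}_{in}(s)$ (nonempty by the definition of $R_b(K)$). The resulting $T_b'$ uses only edges of $M^{K}_{blue}$, preserves indegree one at every non-root node, and has the same edge count as $T_b$, so it is a spanning tree of $M^{K}_{blue}$; each swap merely relocates the two-node subtree consisting of $s$ and its blue leaf child from beneath a vertex outside $K$ to beneath a vertex of $K$, and both sit below $v$, so the cluster at $v$ in $T_b'$ remains $B$.

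The case where $K$ is visible on a red leaf is symmetric: exposed-ness again forbids $\rho(K)$ strictly above $v$, while visibility on a red leaf (not below $v$) forces $\rho(K)$ and $v$ to be incomparable in $T_b$, so $K$ is disjoint from the descendants of $v$ and every leaf under $\rho(K)$ in $T_b$ is red; the roles of $R_b(K)$ and $R_r(K)$ swap, and $T_b'$ is obtained by the dual surgery on edges entering reticulations in $R_r(K)$. The main obstacle I anticipate is the exposed-ness step that rules out $\rho(K)$ being strictly above $v$; once that is secured, the spanning-tree surgery mirrors that in the proof of Lemma~\ref{instancesplit} and the cluster-preservation check is routine.
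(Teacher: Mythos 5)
Your proof is correct and takes essentially the same approach as the paper's: visibility forces the relevant leaves below $\rho(K)$ in any spanning tree, exposedness of $K$ rules out $\rho(K)$ lying strictly above $v$, and the necessity direction of (2) is completed by the same edge-replacement surgery used in the proof of Lemma~\ref{instancesplit}. The only difference is that you spell out that surgery and the cluster-preservation check explicitly, which the paper's proof of this lemma leaves implicit by deferring to Lemma~\ref{instancesplit}.
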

\begin{proof} It is proved in Appendix C.
%
\end{proof}

Lemma~\ref{visible} suggests that an exposed tree  component that is visible on blue (resp. red) leaves has to be colored blue (resp. red) if not colored.

Let $K$ be an exposed  tree component that is invisible  and let $K'$ be a tree component  adjacent to $K$ in $M$. Recall that $R_{b}(K)$ (resp.
$R_{r}(K)$) is the set of cross  front reticulations such that they are each below $K$ and their unique child is a blue (resp. red) leaf. Define:
$$R_{blue}(K, K')=R_b(K)\cap R_b(K'); \;\;R_{red}(K, K') = R_r(K)\cap R_r(K').$$   We also use $M(K_{red}, K'_{blue})$ to denote the spanning subnetwork of $M$ with the following edge set:
\begin{eqnarray}
 {\cal E}(M) - \cup_{s\in R_{blue}(K, K')}  E^{K}_{in} (s) - \cup_{s\in R_{red}(K, K')} E^{K'}_{in} (s). 
\end{eqnarray}
In other words, $M(K_{red}, K'_{blue})$ is obtained from $M$ by removal of all the edges $(u, s)$ such that $u\in K$ for each $s\in R_{blue}(K, K')$ and all the edges $(v, s)$ such that $v\in K'$ for each $s\in R_{red}(K, K')$. 
It is true that  the red leaves below front reticulations in $R_{red}(K, K')$ are merged  into $K$, whereas  the blue leaves in $R_{blue}(K, K')$ are merged into $K'$ in $M(K_{red}, K'_{blue})$,  as shown in 
Figure~\ref{Figure06}b.
Symmetrically, $M(K_{blue}, K'_{red})$ can be defined.  

\begin{lemma} \label{reduction}
    Let $K$ be an exposed tree component that is not colored but adjacent to a colored tree component $K'$ in $M$ (Figure~\ref{Figure06}b). This partial coloring can then be extended in either (i) $M(K_{blue}, K'_{red})$ if $K'$ is red or  (ii)  $M(K_{red}, K'_{blue})$ if $K'$ is blue.
\end{lemma}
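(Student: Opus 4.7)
The plan is to prove case (ii), where $K'$ is colored blue; case (i) then follows by an entirely symmetric construction. Suppose the partial coloring of $M$ is good, witnessed by a spanning subtree $T$ of $M$ in which every blue tree component lies below $v$, every red tree component lies outside the set of descendants of $v$, and $T$ displays $B$ at $v$. My goal is to surgically turn $T$ into a spanning subtree $T'$ of $M(K_{red}, K'_{blue})$ witnessing that the same partial coloring is good in the reduced network.

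The modification is purely local: for each $s \in R_{blue}(K, K')$ whose unique incoming edge in $T$ happens to lie in $E_{in}^{K}(s)$, I replace that edge with an arbitrarily chosen $e_{in}^{K'}(s) \in E_{in}^{K'}(s)$, which exists because $s \in R_b(K')$ forces $E_{in}^{K'}(s) \neq \emptyset$. Since only one incoming edge of each rerouted reticulation is altered and no other edges are touched, $T'$ still selects exactly one incoming edge per reticulation and spans all of $\mathcal{V}(M)$, so it preserves the structure of a spanning subtree of whichever subgraph of $M$ contains its edge set.

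The crux is to show that $T'$ actually lies inside $M(K_{red}, K'_{blue})$, i.e.\ that $T'$ uses neither an edge in $\bigcup_{s \in R_{blue}(K,K')} E_{in}^{K}(s)$ nor one in $\bigcup_{s \in R_{red}(K,K')} E_{in}^{K'}(s)$. Avoiding the first batch is immediate from the construction. Avoiding the second batch is the main obstacle and is where the goodness of the coloring is used essentially: for every $s \in R_{red}(K, K')$, the sole child of $s$ is a red leaf and so is not below $v$ in $T$, forcing $s$ itself not to be below $v$ in $T$; but $K'$ is colored blue and therefore lies below $v$ in $T$, so no parent of $s$ that lies in $K'$ can be $s$'s parent in $T$. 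Hence $T \cap E_{in}^{K'}(s) = \emptyset$, and the same holds for $T'$ since those reticulations were not touched.

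Finally I would verify that $T'$ still has cluster $B$ at $v$. The only leaves whose ancestral path through $T'$ can differ from the one in $T$ are the blue children of the rerouted reticulations $s \in R_{blue}(K, K')$; each such $s$ now hangs below $K'$, which in turn lies below $v$, so its blue leaf child is still below $v$ in $T'$. All remaining leaves keep their $T$-ancestors in $T'$, so the cluster at $v$ is unchanged, completing case (ii). Case (i) is obtained by interchanging the roles of blue and red: reroute the incoming edges of $s \in R_{red}(K, K')$ from $K$ over to $K'$, and use the mirror-image observation that the blue child of any $s \in R_{blue}(K, K')$ must be below $v$, so $s$ cannot be reached in $T$ through the red component $K'$.
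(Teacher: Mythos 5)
Your proof is correct and takes essentially the same route as the paper, which establishes this lemma by the same spanning-tree edge-rerouting argument used for Lemma~\ref{instancesplit}: you reroute the incoming edges of the reticulations in $R_{blue}(K,K')$ from $K$ into $K'$ and use the goodness of the coloring (blue components below $v$, red ones not) to show the untouched reticulations already avoid the deleted edges, exactly as intended. The converse direction you leave implicit is immediate because $M(K_{red},K'_{blue})$ (resp.\ $M(K_{blue},K'_{red})$) is a subnetwork of $M$, mirroring the one-line necessity part of Lemma~\ref{instancesplit}.
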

\begin{proof}  The proof is similar to that of Lemma~\ref{instancesplit}.  \end{proof}

Lemma~\ref{reduction} suggests the fact that when an exposed tree component $K$ is colored,  we can simplify $K$ (and thus the network) by merging  a blue  (resp. red) leaf $\ell$ below $K$ into a blue (resp. red) neighbor $K'$ if $\ell$'s has a grand parent in $K'$. Of note, a similar fact can be found in \cite{Kanj_08_TCS}.

We now consider the inconsistency between the color of a leaf commonly below an uncolored $K$ and 
the neighbors of $K$ by defining  the following parameters:
  \begin{itemize}
 \item $I_b(K)$: The set of {\bf red} neighbors $K'$ of $K$ such that 
$R_{blue}(K, K')\neq \emptyset$.  
\item $I_r(K)$: The set of {\bf blue} neighbors $K'$ of $K$ such that 
$R_{red}(K, K')\neq \emptyset$. 
\item $S_{b}(K)$: the  set of uncolored neighbors  $K'$ of $K$ such that 
$R_{blue}(K, K')\neq \emptyset$. 
\item $S_{r}(K)$: the  set of uncolored neighbors  $K''$ of $K$ such that 
$R_{red}(K, K'')\neq \emptyset$. 
\end{itemize}
Note that at least one of these sets is nonempty for an invisible tree component $K$. 

\begin{lemma} \label{compatibility_lma}
Let $K$ be an uncolored, exposed tree component in $M$ where the tree components are partially colored, and let $I_b(K)$ and $I_r(K)$ be defined as above. Denote this partial tree component coloring as ${\cal C}$.  Then, we have the following facts.
\begin{itemize}
\item[(1.)]  If  $I_b(K)$ is nonempty, but $I_r(K)$ is empty,  ${\cal C}$ can only be extended by coloring $K$ blue.  
\item[(2.)] If $I_r(K)$ is nonempty, but $I_b(K)$ is empty,  ${\cal C}$  can only be extended by coloring $K$ red. 
\item[(3.)]  If  $I_b(K)$ and $I_r(K)$ are both nonempty, ${\cal C}$  is not good. 
\end{itemize}
\end{lemma}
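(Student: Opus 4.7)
The plan is to prove each part by a direct forcing argument on the spanning tree that would witness a good extension. Suppose ${\cal C}^*$ extends ${\cal C}$ and is good, with certifying spanning tree $T$ of $M$: by the characterization of good colorings from Section~\ref{reduction_sec}, the blue tree components of ${\cal C}^*$ are exactly those lying below $v$ in $T$, and the red ones lie off the subtree rooted at $v$. For each reticulation $s$, $T$ retains a unique incoming edge $e_T(s)$; because $K$ is exposed, every $s$ below $K$ is a front reticulation whose child-leaf color dictates whether $s$ sits below $v$ in $T$, and hence the color class of the tree component containing the source of $e_T(s)$.

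For part (1), I would suppose toward contradiction that ${\cal C}^*$ colors $K$ red. Picking any $K' \in I_b(K)$ and any $s \in R_{blue}(K,K')$, the blue child of $s$ forces $e_T(s)$ to originate in a blue tree component, but both $K$ and $K'$ are red in ${\cal C}^*$. The argument then closes by ruling out any third parent of $s$ supplying a blue source, leaving no valid choice of $e_T(s)$ and forcing $K$ to be blue. Part (2) is the mirror image of this argument with blue and red swapped.

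Part (3) is obtained by composing the two forcings: the premise $I_b(K) \neq \emptyset$ applied as in part (1) forces $K$ to be blue in any good extension, while $I_r(K) \neq \emptyset$ applied symmetrically forces $K$ to be red; since these assignments are incompatible, no good extension of ${\cal C}$ exists, which is precisely the statement that ${\cal C}$ is not good.

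The hardest step will be ruling out an alternative blue- or red-component parent of the witnessing reticulation $s$ when $s$ has indegree exceeding two, since the combinatorial structure alone does not preclude such a parent. Making this airtight requires exploiting the algorithmic invariant that at the moment $K$ is considered, prior bottom-up simplifications (from Lemmas~\ref{visible} and~\ref{reduction}) have pruned or pre-colored every other parent component of $s$ so that the recorded neighbors enumerated by $I_b(K)$ and $I_r(K)$ are the only candidates that can still supply $e_T(s)$ with a source of the required color. Once that invariant is in hand, the forced-color conclusions of parts (1) and (2) follow immediately, and the contradiction underlying part (3) is direct.
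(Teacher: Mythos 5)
Your forcing argument is the same one the paper uses: take $K'\in I_b(K)$ and $s\in R_{blue}(K,K')$; the blue child of $s$ must lie below $v$, so the incoming edge of $s$ retained by the witnessing spanning tree must leave a blue component, and since $K'$ is red this forces $K$ to be blue; part (2) is the mirror image and part (3) is the conjunction of the two forcings. Up to that point your proposal and the paper's (one-line) proof coincide.

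The gap is in how you propose to close the case of a reticulation $s$ with indegree greater than two. The ``algorithmic invariant'' you invoke --- that when $K$ is examined, prior bottom-up simplifications have pruned or pre-colored every other parent component of $s$, so that $I_b(K)$ and $I_r(K)$ enumerate the only possible sources of $e_T(s)$ --- is neither stated nor proved in the paper, and it is false in general: components are processed through exposed components from the bottom up, so a third parent of $s$ may sit in a still-uncolored component above $K$, or in a component already colored blue (a blue neighbor sharing a blue front reticulation with $K$ belongs to none of $I_b(K)$, $I_r(K)$, $S_b(K)$, $S_r(K)$). In such a configuration the blue child of $s$ can be merged into that third component, $K$ can be colored red in a good extension, and statement (1) simply fails; so no argument can rescue the lemma for arbitrary non-binary reduced networks, and importing a run-time invariant would in any case turn a statement about colorings of $M$ into a statement about a particular execution of the algorithm. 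The resolution the paper relies on is much simpler: in the binary reduced setting in which the SCCP algorithm and Lemma~\ref{assignment_lma} are formulated (see also Remark (a) after Theorem~\ref{thm3}), $s$ has exactly two parents, one in $K$ and one in $K'$, so the blue child of $s$ can only be merged into $K$ and the forcing is immediate. Replace your invariant with that observation (or add the binary hypothesis explicitly to the statement) and your proof is complete and matches the paper's.
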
 
\begin{proof}
    Assume that $I_b(K)$ is nonempty. Let $K' \in I_b(K)$, and let $s \in R_{blue}(K, K')$. Then, $K'$ is colored red and $s$ has a blue leaf as its child.  Thus, the blue child of $s$ can only be merged into $K$ and $K$ is colored blue.   Similarly, if $I_{r}(K)$ is nonempty, then $K$ can only be colored by red.  if both $I_r(K)$ and $I_b(K)$ are nonempty,  the coloring is not good.
\end{proof}

\begin{lemma} \label{assignment_lma} 
Let $K$ be an uncolored, invisible and exposed tree component in $M$ where tree components are partially colored such that
  $I_b(K)=I_r(K)=\emptyset$ and let $S_b(K)$ and $S_r(K)$ be defined as above.  Denote this partial coloring as ${\cal C}$.  Then, the following facts are true only if $M$ is a {\bf binary} network. 
\begin{itemize}
\item[(1.)]   If  $|S_b(K)|+|S_r(K)|\geq 2$, ${\cal C}$ can be extended by either (i) coloring $K$ blue and the components in $S_r(K)$ red or (ii) coloring $K$ red and 
the components in $S_b(K)$ blue (Figure~\ref{Figure07}a and Case 2 in Figure~\ref{Figure07}b). 
\item[(2.)]   If  $S_b(K) =\{K'\}$ and $S_r(K)=\emptyset$, ${\cal C}$ can be extended by (i) coloring $K'$ blue and switching all blues below $K$ to $K'$ to make $K$ empty  or (ii) coloring $K'$ red and $K$ blue (Case 1 in Figure~\ref{Figure07}b). 
\item[(3)] If  $S_b(K)=\emptyset$ and $S_r(K)=\{K'\}$, ${\cal C}$ can be extended by (i) coloring $K'$ red and switching all red below $K$ to $K'$ to make $K$ empty  or (ii) coloring $K'$ blue and $K$ red. 
\end{itemize}
\end{lemma}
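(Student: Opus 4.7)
The plan is to fix, in each case, a spanning subtree $T$ of $M$ that witnesses goodness of $\mathcal{C}$: so $B$ equals the cluster at $v$ in $T$, every blue component lies below $v$ in $T$, and every red one does not. I will then exhibit a (possibly rerouted) spanning subtree of $M$, or of the explicitly modified network in option~(i) of cases~(2) and (3), witnessing that one of the listed extensions is also good. Two standing observations drive everything. Since $M$ is binary, every front reticulation $r$ below $K$ has exactly two parents, one in $K$ and one in a single other tree component, and $T$ picks exactly one of its two incoming edges. Under $I_b(K)=I_r(K)=\emptyset$, the identity of the non-$K$ parent's tree component is constrained: if $r$'s child is red, that parent lies in a red colored neighbor or in an element of $S_r(K)$; if $r$'s child is blue, it lies in a blue colored neighbor or in an element of $S_b(K)$. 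Together these facts pin down exactly how $T$ routes each leaf below $K$ to one of two candidate components, which is the key handle on the argument.

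For case~(1), I branch on whether $\rho(K)$ is below $v$ in $T$. If it is, then for every $K' \in S_r(K)$ and every $r \in R_{red}(K,K')$, the $K$-parent edge of $r$ cannot lie in $T$ (that would place a red leaf below $v$, contradicting that $B$ is the cluster at $v$), so the $K'$-parent edge must, which forces the red child into $K'$'s cluster and hence $K'$ to be red in $T$. This is exactly the coloring of option~(i). The symmetric branch yields option~(ii). Case~(2) with $\rho(K')$ red in $T$ is handled by the same argument in reverse: every shared blue reticulation has its $K'$-parent edge forbidden from $T$, so its $K$-parent edge lies in $T$, placing a blue leaf below $K$ and forcing $K$ to be blue, matching option~(ii). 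Case~(3) is the color-swapped version of case~(2).

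The remaining and most delicate sub-case is case~(2) with $\rho(K')$ blue in $T$, where option~(i) must be produced. I would construct the required spanning subtree of the modified network by taking $T$ and, for each shared blue reticulation whose $K$-parent edge lies in $T$, substituting the $K'$-parent edge, and in parallel, for each red-child reticulation $r$ below $K$ whose $K$-parent edge lies in $T$, substituting the edge from $r$'s red colored neighbor. Because $\rho(K')$ is already below $v$, the rerouted blue leaves remain in the $v$-cluster; because the rerouted red leaves are moved into a red colored neighbor and hence not below $v$, the $v$-cluster loses no red leaf; and the subtree rooted at $\rho(K)$ now inherits no leaves at all, so $K$ is genuinely emptied and $K'$ is legitimately blue. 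The main obstacle is precisely this bookkeeping step: one must combine $I_r(K)=S_r(K)=\emptyset$ with binariness to guarantee that every red leaf below $K$ has a valid red colored neighbor into which it can be rerouted, and only under these hypotheses does the substitution produce a spanning subtree of the modified network displaying $B$ at $v$.
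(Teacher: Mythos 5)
Your proof is correct, and its skeleton---an exhaustive two-way branch (is $\rho(K)$, resp.\ $\rho(K')$, below $v$ in the witness tree or not?) followed by forced consequences for the shared front reticulations, using binariness to pin down the unique non-$K$ parent---is the same dichotomy the paper's proof exploits. The execution differs: the paper argues purely at the level of colorings (``if $K$ is colored blue, the red leaf shared with each $K'\in S_r(K)$ must be merged into $K'$, so $K'$ is red''), leaving the translation to spanning trees implicit, whereas you anchor every case in a witness spanning tree $T$ and, in case (2)(i), explicitly reroute edges to construct the witness for the modified network. Your handling of case (2) is in fact more careful than the paper's: the paper asserts that $K'$ is \emph{the only} neighbor of $K$ and hence that no red leaf lies below $K$, a claim justified by the algorithm's context (Step 5 has already resolved adjacencies to colored components) but not by the stated hypotheses $I_b(K)=I_r(K)=S_r(K)=\emptyset$, $S_b(K)=\{K'\}$, which still permit red-child front reticulations shared with already-red neighbors. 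You instead observe that binariness together with $I_r(K)=S_r(K)=\emptyset$ forces the other parent of any such red leaf to sit in a red colored neighbor, so it can be rerouted there and $K$ is still emptied. This buys a proof of the lemma as literally stated, at the cost of slightly heavier bookkeeping, while the paper's shorter argument is valid only under the extra algorithmic assumption that $K$ has no colored neighbors left when the lemma is invoked.
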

\begin{proof} It is given in Appendix C. \end{proof}

    
    

\subsection{Algorithm and time analysis}

Taken together, Lemmas 2 to 5 imply the SCCP algorithm in Appendix C.

\begin{theorem}
\label{thm3}
The SCCP can be solved in $O\left(n \cdot 2^{0.552\psi(N)}\right)$ for binary networks $N$ with $n$ nodes, where $\psi(N)$ is the number of invisible tree components of $N$.
\end{theorem}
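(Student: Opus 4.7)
The plan is to bound the recursion tree of the SCCP algorithm of Appendix~C using $\psi(N)$ as the only exponential parameter, and to pay an $O(n)$ simplification cost at each node of the tree. As a preliminary, I would check that one invocation of the algorithm does $O(n)$ work: selecting an exposed tree component, testing its visibility, computing the sets $I_b(K), I_r(K), S_b(K), S_r(K)$, and constructing the subnetworks of Eqns.~(\ref{in_subnetwork})--(\ref{out_subnetwork}) and their variants are each implementable in one traversal of the current network. Hence the running time is dominated by the number of leaves in the recursion tree.

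Next I would classify the recursive invocations according to which of Lemmas~\ref{visible}--\ref{assignment_lma} applies. Lemma~\ref{visible} forces a unique color on a visible exposed $K$; Lemma~\ref{reduction} forces a unique simplification whenever some neighbor of $K$ is already colored; and Lemma~\ref{compatibility_lma} either forces a unique color (from $I_b(K)$ or $I_r(K)$) or terminates the branch. Each of these steps produces a single recursive child and strictly decreases the number of uncolored invisible tree components, so along any root-to-leaf path they contribute only a chain of length at most $\psi(N)$.

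The only genuine branching occurs in Lemma~\ref{assignment_lma}, applied to an uncolored invisible exposed $K$ with $I_b(K)=I_r(K)=\emptyset$. Because $N$ is binary, every front reticulation below $K$ has exactly one cross parent outside $K$, and any neighbor appearing in $S_b(K)\cup S_r(K)$ must itself be invisible (otherwise Lemma~\ref{visible} would already have colored it in an earlier simplification). The worst-case branching arises in Case~1 with, up to symmetry, $|S_r(K)|=0$ and $|S_b(K)|=2$: branch~(i) colors $K$ alone (removing one invisible tree component) while branch~(ii) colors $K$ together with the two components of $S_b(K)$ (removing three). All remaining subcases of Case~1, together with Cases~2 and~3, color at least two invisible components in both branches and are dominated by the recurrence $T(\psi)\le 2T(\psi-2)$.

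Solving the dominating recurrence $T(\psi)\le T(\psi-1)+T(\psi-3)+O(n)$ through its characteristic polynomial $x^3-x^2-1=0$ yields dominant real root $\alpha\approx 1.4656$ with $\log_2\alpha<0.552$; the recursion tree therefore has $O(\alpha^{\psi(N)})$ leaves and the total cost is $O\!\left(n\cdot 2^{0.552\,\psi(N)}\right)$. The main obstacle I expect is the bookkeeping in the previous paragraph: one must verify that the two neighbors counted in $S_b(K)$ in the worst case are genuinely distinct, previously uncolored, invisible tree components, and that no subtler branching configuration survives after Lemmas~\ref{visible}--\ref{compatibility_lma} have been exhaustively applied. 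This is precisely where the binary hypothesis and the assumption $I_b(K)=I_r(K)=\emptyset$ are both essential.
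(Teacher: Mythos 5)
Your branching analysis is essentially the paper's own: the only two-way branches are the Lemma~\ref{assignment_lma} situations, the worst case is a single leaf color below $K$ with at least two uncolored neighbors (one branch colors $K$ alone, the other disposes of $K$ and colors at least two neighbors), every other branching case loses at least two uncolored components on both sides, and the resulting recurrence $f(k)\le\max\{2f(k-2),\,f(k-1)+f(k-3)\}+cn$ is exactly what the paper solves; your characteristic polynomial $x^3-x^2-1=0$ with root $\approx 1.4656$ is the same as the paper's condition $2^{-\gamma}+2^{-3\gamma}=1$, and the exponent bound $<0.552$ is correct.

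The genuine gap is in how you recover the factor $n$ rather than a larger polynomial. You argue that each invocation costs $O(n)$ and that the non-branching invocations between two branchings form a chain of length at most $\psi(N)$ because each such step ``strictly decreases the number of uncolored invisible tree components.'' That claim fails: Step~4 invocations on \emph{visible} exposed components (Lemma~\ref{visible}) and Step~3 removals of leafless components do not reduce the count of uncolored invisible components at all, and the number of such steps between two consecutive branchings is bounded only by the total number of tree components, which can be $\Theta(n)$ independently of $\psi(N)$. With your accounting (cost $O(n)$ per invocation times chain length) you only get something like $O\bigl(n\cdot|\mathcal{T}(N)|\cdot 2^{0.552\psi(N)}\bigr)$. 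The paper closes this with a small amortization: the work of one non-branching step is proportional to the size of the component it processes (and its incident front reticulations), and distinct steps between two consecutive executions of the branching Steps 6 and 7 process distinct components, so the total work between branchings is $O(|\mathcal{V}(N)|)$; this is what justifies the single additive $c|\mathcal{V}(N)|$ term in the recurrence and hence the bound $O\bigl(n\cdot 2^{0.552\psi(N)}\bigr)$. A smaller remark: your justification that members of $S_b(K)\cup S_r(K)$ are necessarily invisible ``because Lemma~\ref{visible} would already have colored them'' is not sound, since those neighbors lie above the front reticulations below $K$ and need not have been processed yet; the paper sidesteps this by counting uncolored components in its recurrence, so it does not change the recurrence, but you should phrase the decrement in those terms rather than via invisibility of the neighbors.
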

\begin{proof}
 After pre-processing the network and topologically sorting all the tree components in $N$, both Step 1 and Step 2 take constant time for each recursive call. 
    
   For each exposed tree component $K$, there are at most $|{\cal V}(K)|+1$  edges leaving $K$,  as each tree node is of outdegree 2. This implies that at most $|K|+1$ front reticulations are found below $K$ and $K$ are adjacent to at most $|K|+1$ tree components. Since $K$ is exposed,  the visibility of $K$ can be done by checking whether there is an inner reticulation below $K$, which can be in $O(|{\cal V}(K)|)$ if we keep updating the status of being cross for each of the reticulations in each step.  Therefore, each of the "if" conditions can be checked using at most $O(|K|)$ many basic set-arithmetic operations.
    
    In each of Step 3.1, 4.2, 5.1, 6.1, 6.5, 7.1, and 7.4, we need to compute and simplify $N_{blue}^K$, $N_{red}^K$, $N(K_{blue}, K'_{red})$, $N(K_{red}, K'_{blue})$, or $N - \mathcal{V}(K)$. In Steps 6.2 and 6.6, we check the nodes in every neighbor of $K$. In total, these steps take $O(|\mathcal{V}(N)|)$ basic set-arithmetic operations.

 Let $\psi(N)$ be the number of tree components that are uncolored in $N$ when the algorithm starts. If Step 3, 6, or 7 is executed when working on a uncolored component $K$, then $K$ is either removed or becomes colored by the end of the step.  Between two consecutive executions of  Step 6 or Step 7,  the algorithm may execute a series of Steps 3,  4 and 5. 
    Since different executions consider different tree components in Steps 3 and 4 and different neighbors in Step 5, 
    the algorithm uses at most $O(|\mathcal{T}(N)|)$ basic set-arithmetic operations between two consecutive executions of Steps 6 and 7 in total. Each execution of these two steps can create two recursive calls on two simplified networks.

   In summary, let $f(k)$
   denote the worst-case time taken by the algorithm for binary networks with $k$ uncolored tree components. Note that the color of  a  tree component remains unchanged once it is assigned.  Assume $K$ is the selected tree component at a step.  If leaves of both colors are found below $K$, then Step 6 is executed. At the end of Step 6,  $K$ and at least one neighbor of $K$ become colored in both $N^{K}_{blue}$ and $N^{K}_{red}$, so $\psi \left(N^{K}_{blue}\right)\leq \psi (N)-2$ and  $\psi \left(N^{K}_{red}\right)\leq \psi (N)-2$.

    Similarly, if $K$ only has a neighbor $K'$ and all the leaves below $K$ are of the same color, then
Step 7 is executed.  In this step,  $K'$ is colored, whereas $K$ is either removed or is colored with the other color in both $N^{K}_{blue}$ and $N^{K}_{red}$.  Again, we have $\psi \left(N^{K}_{blue}\right)\leq \psi(N)-2$ and 
    $\psi\left(N^{K}_{red}\right)\leq \psi^ (N)-2$.

    Finally, if $K$ has two or more  neighbors and only leaves of a color $c$ are found below $K$,  Step 6 is then executed.  Assume $c$ is blue.   $K$ is colored blue in $N_{blue}^K$.  However, $K$ is removed and
at least two of its neighbor tree components are colored blue in $N^{K}_{red}$. 
For this case, therefore,  we have $\psi (N^{K}_{blue})\leq \psi (N)-1$ and 
    $\psi(N^{K}_{red})\leq \psi (N)-3$.
    
   Taken together, the above facts imply:  
    $$f(k)\leq \max\{2f(k-2), f(k-1) + f(k-3)\} +c|\mathcal{V}(N)|$$
    for a small constant $c$. By induction,  we can show that:
    $$f(k) \leq  c'  |\mathcal{V}(N)| \cdot 2^{\gamma \psi(N)},$$
    where $\gamma \approx 0.55416$ is the real solution of $2^{-x} + 2^{-3x} = 1$ and $c'$ is a constant. This completes the proof.
\end{proof}

\noindent {\bf Remarks} (a)  The SCCP algorithm presented in Appendix C is only correct for binary networks, as  the statements in Lemma~\ref{assignment_lma} do not hold for arbitrary networks. As such, the time complexity given in Theorem~\ref{thm3} is only valid for binary networks. For arbitrary networks, the coloring settings for the neighbors of the selected component in Step 6 and Step 7 need to be revised; and the time complexity of the corresponding algorithm can be expressed as $O\left(|{\cal V}(N)|2^{0.5 h(N)}\right)$, where 
$h(N)$ is equal to the sum of the super-indegrees of the reticulation components minus the number of the reticulation components. Here, the super-indegree of a reticulation component $R$ is defined to be the number of the tree components $K$ such that there exists an edge from a node in $K$ to a reticulation in $R$.  

(b)  We implement the second CCP program by simply calling the SCCP algorithm presented here on each tree node.   

\begin{figure}[t!]
    \centering
    \includegraphics[scale=1]{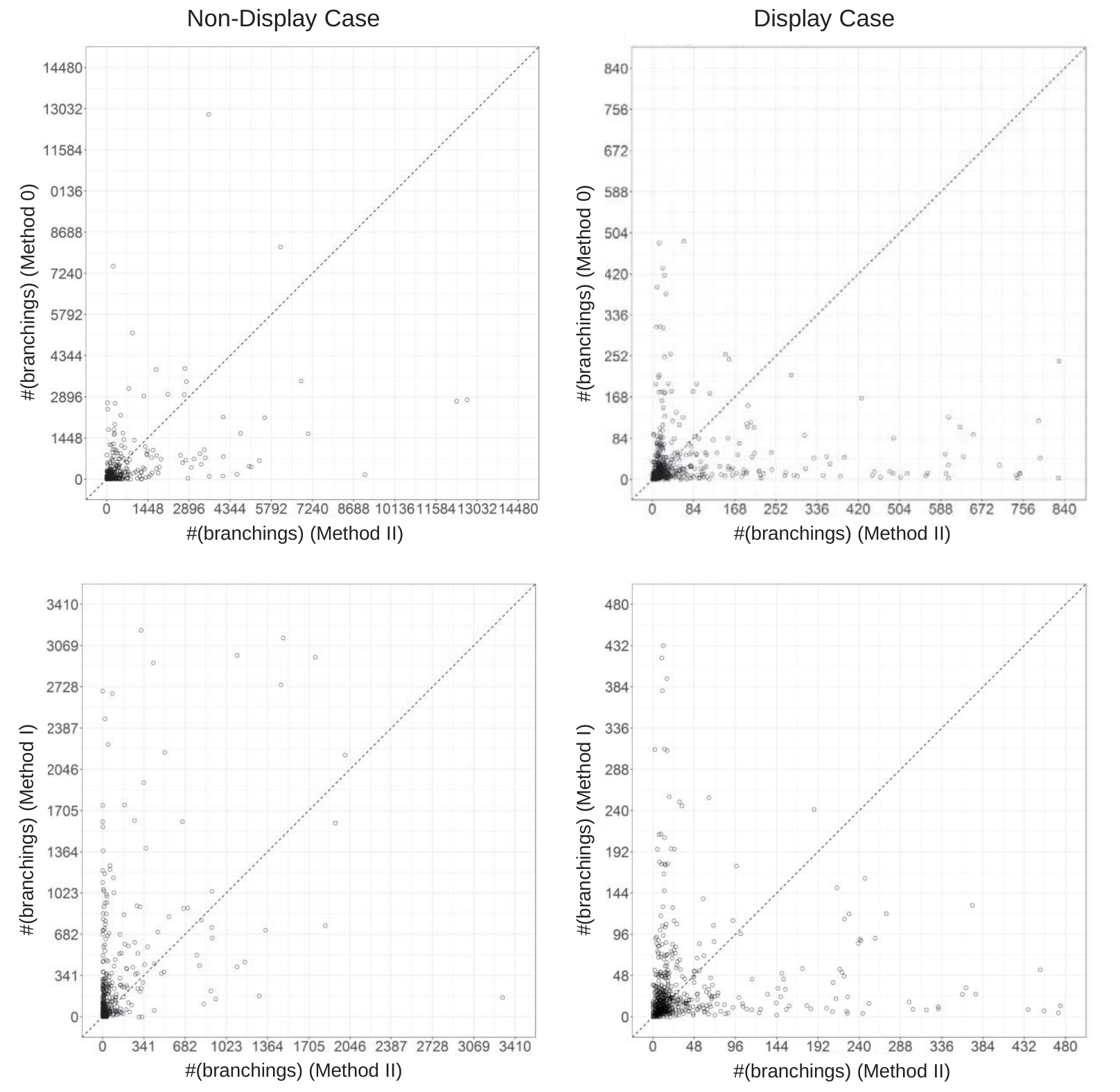}
    \caption{Performance comparison of the three methods on random networks in Group C. Method 0:  the method in \cite{Lu_17_APBC}; Method 1: the simple method presented in Section 3; 
Method 2, the method presented in section 5 (see remark in the section). \label{Fig08}}
\end{figure}

\section{Validation tests}

To validate  two new methods, we implemented and compared them, as well as each of them against 
the program reported in \cite{Lu_17_APBC},  on 3,000 random  binary networks on a PC cluster with 32 GB RAM and 24 cores.  The old method, the methods presented in Sections 3 and 5 are called Method 0, I and II, respectively. 
The run time for each method largely depends on the number of steps in which two possible colorings of a tree component are separately examined. This {\it branching} number reflected by the exponential term in their time complexity was used to examine the performances of the three methods.

The random networks were generated by a computer program available online (http://phylnet.univ-mlv.fr/tools/randomNtkGenerator.php). The 3,000 random network-cluster pairs were divided into three equal groups:
\begin{quote}
  {\bf Group A}.   Each network had six leaves and 20 reticulations; each cluster had three or four leaves.

  {\bf Group B}.  Each network had eight leaves and 30 reticulations; each cluster had four or five leaves.

  {\bf Group C}.  Each network had ten leaves and 40 reticulations; each cluster had five or six leaves.. 
\end{quote} 
The performances of the three methods on networks in these three groups showed similar trends. Hence, 
only the analyses on Group C are summarized in Figure~\ref{Fig08}. The analyses on Group A and Group B can be found in Supplementary Figures~\ref{FigS2} and~\ref{FigS3}.  These validation tests suggests the following facts:

  1. For each method and each network, the  branching number in a non-display case in which the given cluster was not displayed is much larger than in a display case. As such, we analyzed the display  and non-display  cases in different figures. In general, the branching number in a non-display case is  at least ten times as large as that in a display case.\vspace{0.5em}

 2. Methods 0 and II had similar performance in non-display cases. Their performance were much worse than Method I. The left bottom panel in Figure~\ref{Fig08} shows that the branching number for Method I is more than 20 to 30 times as fast as the other two in many non-display cases. \vspace{0.5em}

3. Methods I and II had comparable performance in display cases. Method I performed much better than Method II in hundreds of display cases and vice versa.  Both performed significantly better than Method 0. \vspace{0.5em}

The reason for the bad performance of Method II in a non-display case is that it determines the display of a cluster node by node. Since for each node,  its branching number is small, estimated as $2^{0.552 \psi(N)}$ (Theorem~\ref{thm3}),  it is expected to surpass Method I when the input network gets larger and larger.

We also discover a simple reduction from the SCCP to the SAT problem. This allows us to develop new practical SCCP and CCP programs through plugging in a popular SAT solver such as the MiniSAT (http://minisat.se/).  
Since the reduction was just discovered, we have yet to compare our method with any  SAT-solver based method. We will do it in the final version.

\section{Conclusions} 

The decomposition technique 
and visibility property \cite {Huson_2010_book} were shown to be powerful for solving the TCP
on arbitrary networks \cite{gambette2015solving,gunawan2017decomposition,Gunawan_16_ECCB}.  In this work, applying these two 
techniques, we have developed two fast algorithms for
solving the CCP.  These two algorithms were implemented
in C.   Our validation test shows that the methods presented in Sections 3 and 5 have different advantages. The former is better in non-display cases, whereas the latter is better in display cases.

The programs have been extended to programs for computing 
the Soft Robinson-Foulds distance for networks.  Both programs facilitate reconstructing and
 validating networks in comparative genomics.

\ifCLASSOPTIONcompsoc
  \section*{Acknowledgments}
\else
  \section*{Acknowledgment}
\fi

LX Zhang would like to thank Dominik Scheder for discussions on the SAT problems.
This work was financially supported by the Singapore Ministry of Education Academic Research Fund Tier 1 grant [R-146-000-238-114].

 \newcommand{\noop}[1]{}

\clearpage

\section*{Appendix A:  The first CCP Algorithm and its time complexity }\vspace{1em}

The pseudo-code of the CCP algorithm presented in Section 3 is given below. \vspace{1em}

{
\small
\begin{tabular}{l}
 \hline
\\
\hspace*{18em} {\sc CCP Algorithm One} \vspace{0.5em}
\\
/* {\bf Input}: $N$, a  reduced network in which leaves are colored blue and red, and $\beta$, the number of blue leaves; */ \\
/* {\bf Output}: "TRUE" if $N$ displays the cluster consisting of blue leaves and "FALSE" otherwise.  */ \vspace{0.5em}\\
/* Update (Network N'):  a procedure that contracts a simplified tree component $K'$ in $N'$ into a leaf */ \\
/*~~~~or removes $K'$ completely depending whether $K'$ contains a colored leaf or not. */ \vspace{0.5em}\\
{\sc IS-B-Display}(Network $N$, int $\beta$)\{\\ \\
~1. If ($\beta == 1$ ) return TRUE; \vspace{0.5em}\\
~2. Compute an exposed tree component $K$;
\vspace{0.5em}\\
~3. If (no colored leaf is found below $K$) \{\\
\hspace*{2em} 3.1.~~ update $N$ by removing $K$;\\
\hspace*{2em} 3.2.~~ return {\sc IS-B-Display} (Update($N$), $\beta$);\\
~~~~\}\vspace{0.5em}\\
~4. If ($K$ is visible only on blue leaves) \{\\
\hspace*{2em} 4.1.~~ construct $N^{K}_{blue}$;\\ \hspace*{2em} ~~~~~~~~$\beta=\beta-$~\#(blue leaves below $K$ in $N$)+1;\\
\hspace*{2em} 4.3.~~ return {\sc IS-B-Display} (Update($N^{K}_{blue}$), $\beta$); \\
~~~~\}\vspace{0.5em}\\
~5. If ($K$ is visible on red leaves) \{\\
\hspace*{2em} 5.1.~~ construct $N^{K}_{blue}$;\\ 
\hspace*{2em} 5.2.~~ return "TRUE" if blue leaves are displayed at a node in the updated $K$ in  $N^{K}_{blue}$;\\
\hspace*{2em} 5.3.~~ if ($K$ is also visible on blue leaves) \{ \\
\hspace*{6em} ~~~~~~~ return "FALSE"; \\
\hspace*{3em}~~~~~~\} else \{\\ \hspace*{6em} 5.5.~~ construct $N^{K}_{red}$;\\
\hspace*{6em} 5.6.~~ {\sc IS-B-Display} (Update($N^{K}_{red}$), $\beta$);\\
\hspace*{3em}~~~~~ \}\\ 
~~~~\}\vspace{0.5em} \\
%
%
~6. If ($K$ is invisible) \{ \\
\hspace*{2em} 6.1.~~ construct $N^{K}_{blue}$;\\
\hspace*{2em} ~~~~~~~~$\alpha=\beta -$~\#(blue leaves below $K$ in $N$)+1;\\
\hspace*{2em} 6.3.~~ $X=$~{\sc IS-B-Display}(Update($N^{K}_{blue}$), $\alpha$);\\
\hspace*{2em} ~~~~~~~~if ($X$~==~TRUE) return TRUE; \\
\hspace*{2em} 6.5.~~ construct $N^{K}_{red}$;\\
\hspace*{2em} 6.6.~~ return {\sc IS-B-Display}(Update($N^{K}_{red}$), $\beta$);\\
~~~~\}\vspace{0.5em}\\
\}\vspace{0.5em}\\
\hline     
\end{tabular}
}
\vspace{2em}

It is not hard to see that the proof of Lemma~\ref{instancesplit} implies the following facts. 
\vspace{0.5em}

 \noindent {\bf Lemma A.1}  Let $K$ be a visible, exposed tree component of $N$.  
 \begin{itemize}
  \item[(1)] If $K$ is visible only on blue leaves, then $N$ displays $B$ if and only if $N^{K}_{{blue}}$ displays $B$.
  \item[(2)] If $K$ is visible only on red leaves, then  $N$  displays $B$ if and only if either $N^{K}_{{red}}$ displays $B$ or $N^{K}_{{blue}}$  displays $B$ at a node in $K$.
  \item[(3)] If $K$ is visible on both red and blue leaves, then $N$ displays $B$ if and only if  $N^{K}_{{blue}}$ displays $B$ at a node in $K$.
 \end{itemize}
\vspace{2em}

\noindent {\bf Theorem A.2}
The CCP algorithm presented in Appendix A takes $O\left( |\mathcal{R}(N)| \cdot 2^{\psi(N)} + |\mathcal{E}(N)|\right)$ time on arbitrary reduced networks $N$, where $\psi (N)$ denotes the {\it invisibility}  number of $N$, the number of invisible tree components.
\begin{proof}
Step 1 takes constant time. Step 2 takes $O(|\mathcal{E}(N)|$ for the input network $N$. One simple way of finding an exposed component is to compute the non-trivial components in $N$ and then topologically sorting them \cite{gunawan2017decomposition}.  In each of the following recursive steps,  an exposed component can be taken from the end of this sorted list of components  in constant time.  

Let $|K|$ denote the number of tree nodes in $K$. Since each tree node is of outdegree two, there are at most $|K|+1$ reticulations having parents in $K$. Therefore,  the ``if" condition in each of Steps 4 to 6 can be verified in $|K|$ basic set-arithmetic operations. 

In Step 3.1, the removal of $K$ can be done by deleting the edges entering the parent of $\rho(K)$. In each of Steps 4.1. 5.1, 5.5, 6.1 and 6.5, we need only to remove certain edges entering the front reticulations below $K$.  In total,  these steps take 
at most {$|\mathcal{R}(N)|$} basic set-arithmetic operations. 

In Step 3, we remove $K$, as there is no network leaf below $K$.
The property that no leaf is below $K$ implies that for any network leaf in $N$,  there is a path from $\rho_N$ to this leaf that does not go through $K$. This implies that $K$ is an invisible component in $N$. Therefore,  Step 3 and Step 6 are executed only when invisible exposed components are examined. 

Between two consecutive executions of  Step 6,  the algorithm may execute a series of Steps 3, 4 and 5. Since only certain edges entering the front reticulations below $K$ are removed in  Step 3.1, 4.1, or 5.1 and different executions remove different components and thus {different} edges, 
the algorithm uses at most $|\mathcal{R}(N)|$ basic set-arithmetic operations between two consecutive executions of Step 6  in total.

In summary, let {$f(k)$}
denotes the worst-case time taken by the algorithm for networks with $k$ invisible components. Since visible components remain visible in both  $N^{K}_{blue}$ and $N^{K}_{red}$, $\psi(N^{K}_{blue})\leq \psi (N)-1$ and 
$\psi(N^{K}_{red})\leq \psi (N)-1$ if $K$ is an invisible component. Therefore, 
$$f(k)\leq 2f(k-1)+c|\mathcal{R}(N)|,\;\; 
|\mathcal{R}(N^{K}_{blue})|\leq |\mathcal{R}(N)|,\;\;
|\mathcal{R}(N^{K}_{red})|\leq |\mathcal{R}(N)|,$$
for some small constant $c$. This implies that 
for  $N$, 
the algorithm takes at most $O\left(|\mathcal{R}(N)|\cdot 2^{\psi(N)} + |\mathcal{E}(N)| \right)$.
\end{proof}

\section*{Appendix B.  Proof of Theorem 2}

\noindent {\bf Theorem 2.} The SCCP on networks can be linearly reduced to the SAT problem in such a way  that a SCCP  instance $Q: (N, v)$  is transformed into a SAT instance $S_Q$ which contains 
$|\Gamma(N)|$ variables and the number of terms in each clause is bounded by the maximum indegree of a reticulation node in $N$. 
\begin{proof}

  Consider a SCCP instance $Q$ consisting of a network $N$
with colored leaves in which no tree component contains both blue and red leaves and a node $v$, the root of a tree component $K_v$. 
Without loss of generality, we may assume that $N$ is reduced, i.e. each reticulation node consisting of a single reticulation node \cite{pardi2015}. 
   A tree component is said to be {\it reachable}  from $v$ if $v$ is an ancestor of  the nodes in this component. We denote $\Gamma (N)$ the set of all tree components in $N$ and by 
$\Gamma(v)$ the set of tree components reachable from $v$. We also denote
by ${\cal R}(v)$ the set of reticulations reachable from $v$.

  For each tree component $K$, we introduce a variable $v(K)$. For each reticulation $s\in {\cal R}(v)$, let $K_s$ be the tree component rooted at the child of $s$ and let  $K^{(s)}_1, K^{(s)}_2, \cdots, K^{(s)}_m$ be all the components containing a parent of $s$, we introduce the following two clauses:
\begin{eqnarray}
  C_{s1}:  v\left(K_s\right)\lor \overline{v\left(K^{(s)}_1\right)} \lor  \overline{v\left(K^{(s)}_2\right)} \lor \cdots \lor  \overline{v\left(K^{(s)}_1\right)};
\nonumber\\
 C_{s2}:   \overline{v\left(K_s\right)}\lor {v\left(K^{(s)}_1\right)} \lor  {v\left(K^{(s)}_2\right)} \lor \cdots \lor  {v\left(K^{(s)}_1\right)}. \label{type1_clause}
\end{eqnarray}  
Additionally,  the variables corresponding tree components containing a blue leaf  and $K_v$ will each take the value "TRUE",  Hence, for each of these components $K$, we introduce a single term clause: 
\begin{eqnarray} A_K:   v(K),  K\in \left\{K_v, \{\ell\} : \ell\in B\right\}. \label{type2_clause}\end{eqnarray}
Similarly, the variables corresponding tree components containing a red leaf  and those not reachable from $v$ will each take the value "FALSE", for which we introduce the following  single term clauses:
\begin{eqnarray}B_K:  \overline{v(K)}, \;\;\; K=\{\ell\},  \ell\in {\cal L}(N)\backslash B, \mbox{ or } K\in \Gamma(N)\backslash \Gamma(v). \label{type3_clause} \end{eqnarray}

We now show that a truth assignment satisfies the SAT  instance   $S_Q$ consisting of the clauses listed in Eqn.~(\ref{type1_clause})-(\ref{type3_clause})  if and only if $N$ displays all the blue leaves at $v$.

Assume that $N$ displays $B$ at $v$ via a spanning $T$. Then, the variable corresponding to a tree component is assigned ``TRUE" if  this component is below $T$.  The clauses in Eqn.~(\ref{type2_clause})-(\ref{type3_clause}) are clearly satisfiable.  Consider a  reticulation $s$ in $N$. We let $p_s$ and $c_s$ be the parent and child of $s$ in $T$, respectively, which are unique, and let $P_T(\rho(N), s)$ be the path from $\rho(N)$ to $s$ in $T$. If $P_T(\rho(N), s)$ contains $v$, the tree components containing $p_s$ and $c_s$ are below $v$ and thus the corresponding variables take the value ``TRUE", this makes the two clauses defined in  Eqn.~(\ref{type1_clause}) for $s$ satisfiable. If $P_T(\rho(N), s)$ does not contain $v$, the tree components containing $p_s$ and $c_s$ are not below $T$ and thus the corresponding variables take the value ``FALSE". This also makes the two clauses defined  in  Eqn.~(\ref{type1_clause}) for $s$ satisfiable.

Conversely, assume there is a truth assignment that makes $N_Q$ satisfiable. Consider a tree component $K$. If $v(K)$ takes the value ``TRUE",
there must be a path from $v$ to $\rho(K)$ that passes through the tree components $K'$ whose corresponding variables take the value
"TRUE".  Assume that it does not hold.  There must be a reticulation $s$ such that the tree component rooted at its child is ``TRUE" but all the tree components containing at least one parent of $s$ are ``FALSE",  implying that the clause $C_{s2}$ defined in Eqn.~(\ref{type1_clause}) is false. This is a contradiction. 

Since each tree component containing a blue leaf takes the value ``TRUE",  the subnetwork consisting of tree nodes in all ``TRUE" components and reticulation nodes between them has  a spanning subtree that roots  at $v$ and contains all blue leaves but not red ones. This shows that $N$ displays all blue leaves at $v$. 
\end{proof}
%
%
%
%

\section*{Appendix C:  The SCCP Algorithm and its time complexity}\vspace{2em}

\noindent {\bf Proof of Lemma 2}

   (1.)  Let $K$ be visible on a blue leaf $\ell'$ and a red leaf $\ell''$.  Suppose, on the contrary, that $M$ displays  $B$ at $v$ via a spanning tree $T$.  Since $T$ is a spanning tree,  two paths
$P'$ and $P''$ exist from $\rho(M)$ to $\ell'$ and $\ell''$ in $T$, respectively. Since $K$ is visible on $\ell'$ and $\ell''$,  $P'$  and $P''$ must  both contain $\rho(K)$. If $K$ is below $v$ in $T$, then $\rho(K)$ must be below $v$ and $\ell''$ thus belongs to $B$. If $K$ is not below $v$ in $T$,  it is incomparable to $v$, as $K$ is exposed.  In this case, we  find that  $\ell'$ does not belong to $B$. This is a contradiction. 

(2.)  The sufficiency  is clear, as $M'$ is a subnetwork of $M$.   Let $K$ be visible on a blue leaf $\ell$. Since $K$ is exposed, $\ell$ is either in $K$ or the child of an inner front reticulation,  whose parents are all in $K$ (Figure~\ref{Figure06}a).  Assume that $M$  displays $B$ at  $v$ via a spanning tree $T$. Thus $\ell$ is below $v$ and thus the unique  path from $v$ to $\ell$ must contain $\rho(K)$ in $T$, as every grandparent of $\ell$ is in $K$. Therefore $K$ must be blue.  Similarly, $K$ must not be below $v$  and thus $K$ must be  red if $K$ is visible on a red leaf. 

The completes the proof of Lemma 2.
\vspace{1em}

\noindent {\bf Proof of Lemma 5} 

    (1) Assume that $|S_b(K)| + |S_r(K)| \geq 2$. 
    Note that we can color $K$ with either blue or red. 
    If $K$ is colored blue, then for every tree component $K'$ in $S_r(K)$, we have that $K'$ is uncolored and  $R_{red}(K', K') \neq \emptyset$. Then,  we have to merge the red leaf below each front reticulation in $R_{red}(K', K') $ to $K'$ and thus $K'$ must be colored red. Similarly, if $K$ is colored red,  we must color every tree component in $S_b(K)$ blue.
    
    (2) Assume that $S_b(K) = \{K'\}$ and $S_r(K) = \emptyset$.  $K'$ can be colored either red or blue. 
 As $K$ is invisible, $K$ does not contain any network leaf. Moreover, every reticulation below $K$ must be a cross front reticulation, and it must be in $R_b(K) \cup R_r(K)$. As $K'$ is the only neighbor of $K$ and $K' \in S_b(K)$, this means that every reticulation below $K$ is in $R_{blue}(K, K')$ and $K$ has no red leaf below it.  If $K'$ is colored blue,   we can remove the edges in $E_{in}^K(s)$ for every reticulation $s$ below $K$, which makes $K$ has no labeled leaf below it. If $K'$ is colored red, then we must color $K$ blue to get the blue leaves below $K$ and $K'$ .  Hence (2) holds.
    
    The proof for (3) is analogous to that of (2). This completes the proof.

The pseudo-code of the SCCP algorithm developed in in Section 5.2 is presented below. \vspace{1em}

{\small
\begin{tabular}{l}
 \hline \\
\hspace*{18em} {\sc  SCCP Algorithm} \vspace{1em}
\\
/*~~~~{\bf Input}: $N$,  a binary network with blue and red leaves, and $v$,  the root of a tree component in $N$; */ \\
/* {\bf Output}: "TRUE" if $N$ displays the cluster consisting of  blue leaves at $v$ and "FALSE" otherwise. */\vspace{0.5em}\\
/* $c(K)$: the color status for tree component $K$. Its value can be "blue", "red", "U" (unassigned). */
\vspace{1em}\\
{\sc IS-B-Display} (network $N$, node $v$) \{\vspace{0.5em}\\ 
~~1. Set the component $C_v$ containing $v$ to be blue and  other tree components not below $C_v$ red; \vspace{0.5em}\\
~~2. Select an exposed tree component $K$ in $N$;
\vspace{0.5em}\\
~~3. If ($K$ has no leaf below it) \{\\
\hspace*{2em} 3.1.~~ update $N$ by removing $K$;\\
\hspace*{2em} 3.2.~~ {\sc IS-B-Display} (Update($N$), $v$);\\
~~~~~\}\vspace{0.5em} \\
~~4. If (~($K$ is visible  on leaves of color $c$) or ($c(K) \neq\;U$)~) \{\\
\hspace*{2em} 4.1.~~   if ($K$ is visible on leaves of two colors or $c\neq c(K)$) 
~ return ``FALSE";\\
\hspace*{2em} 4.2.~~   if ($c(K)==\mbox{``blue"}$ \&\& $c==\mbox{blue}$) \{\\ \hspace*{5em} if ( ($K$ contains $v$) \& (all blue leaves are below $K$) return "TRUE"; else  $M :=N^{K}_{{blue}}$;\\
\hspace*{4em}~~ \} else $M :=N^{K}_{{red}}$;\\ 
\hspace*{2em} 4.3.~~ {\sc IS-B-Display}$\left(\mbox{Update}\left(M\right), v\right)$; \\ 
~~~~~\}\vspace{0.5em} \\
~~~~/* {\bf in Steps 5\textendash7, $K$ is invisible and its color status is not assigned.} */ \\
~~5. If ($K$ is adjacent to a tree component $K'$ s.t. $c(K')=$ ``blue" or ``red")) \{\\
\hspace*{2em} 5.1.~~   if ($c(K')==\mbox{``blue"}$) $M:=N(K'_{blue}, K_{red})$; else $M:=N(K'_{red}, K_{blue})$;\\ 
\hspace*{2em} 5.2.~~ {\sc IS-B-Display}
$\left(\mbox{Update}(M), v\right)$;\\ 
~~~~~\}\vspace{0.5em} \\
~~6. If (~(leaves of two colors are   found below $K$) or ($K$ is adjacent to 2+ tree components without color))  \{ \\
\hspace*{2em} 6.1.~~ compute $M:=N^{K}_{{blue}}$;\\
\hspace*{2em} 6.2.~~ for each neighbor $K'$ of $K$, $c(K') = $ "red"  if $K'$ contains red leaves in $M$;\\
\hspace*{2em} 6.3.~~ $X=$~{\sc IS-B-Display}(Updated($M$), $v$);\\
\hspace*{2em} ~~~~~~~~if ($X$~==~TRUE) return "TRUE"; \\
\hspace*{2em} 6.5.~~ compute $M := N^K_{red}$;\\
\hspace*{2em} 6.6.~~ for each neighbor $K'$ of $K$, $c(K') = $ "blue" if $K'$ contains blue leaves in $M$;\\
\hspace*{2em} 6.7.~~  return {\sc IS-B-Display}($M$), $v$);\\
~~~~~\}\vspace{0.5em}\\ 
~~7. If (only  leaves of a color $c$ are below $K$)~\&\&~($K$ has only a neighbor $K'$ without color status))  \{ \\
\hspace*{2em} 7.1.~~ if ($c==$ ``blue") \{  $M:=N^{K}_{{blue}}$; ~
$c(K')$:= "red"; \} else \{ $M:=N^{K}_{{red}}$; ~
  $c(K')$ := "blue"; \} \\
\hspace*{2em} 7.2.~~ $X=$~{\sc IS-B-Display}(Updated($M$), $v$);\\
\hspace*{2em} ~~~~~~~~if ($X$~==~TRUE) return TRUE; \\
\hspace*{2em} 7.4.~~ $M:= N - \mathcal{V}(K)$;\\
\hspace*{2em} ~~~~~~~~if ($c==$ ``blue") \{ $c(K')$:= "blue" \}  else  \{ $c(K')$:= "red"; \}\\
\hspace*{2em} 7.6.~~ return {\sc IS-B-Display}(Updated($M$), $v$);\\
~~~~~\}\vspace{0.5em}\\
\}\vspace{0.5em}\\
\hline     
\end{tabular}
}

\newpage

\renewcommand\thefigure{S\arabic{figure}}    

\setcounter{figure}{0}    
\section{Supplementary Figures}

   \begin{figure}[h!]
    \centering
    \includegraphics[scale=1]{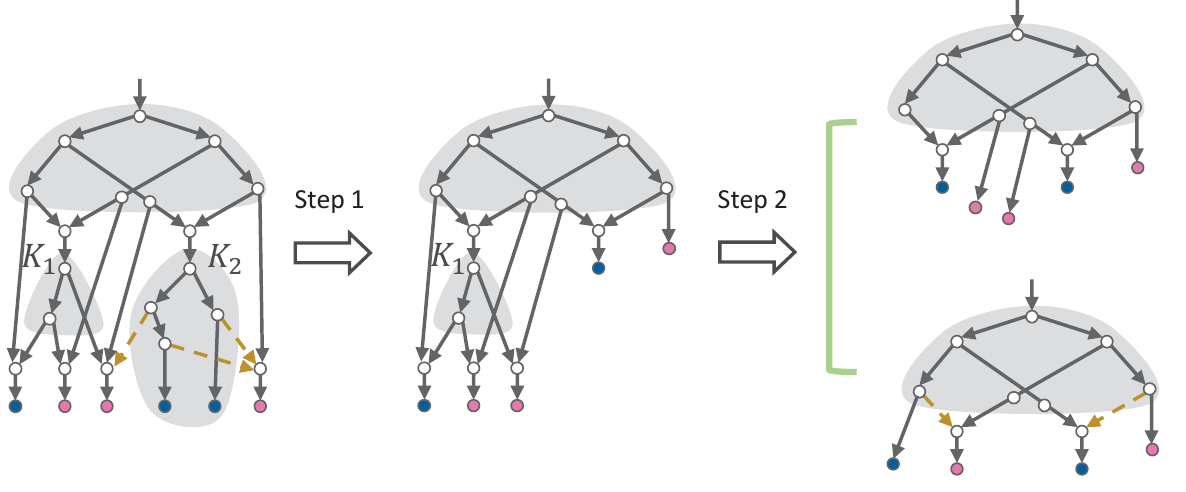}
    \caption{ Illustration of the {\sc CCP Algorithm One} in Section 3. The left panel contains the input network $N$. In recursive step 1,   $K_2$ is first selected in this example. Since $K_2$ is visible on two blue leaves,  we shall keep the blue leaves and switch off the red leaves below $K_2$ to obtain $N^{K_2}_{blue}$ (through removing the dashed edges), which is updated into the middle network. In recursive step 2, $K_1$ is selected. Since $K_1$ is invisible, we have to consider two possible cases: (1) blue leaves are kept in $K_1$, resulting in the right top network after updating; and (2) red leaves are kept in $K_1$, resulting in the right bottom network after updating. In recursive step 3, we consider both networks one by one. 
The cluster of blue leaves is not displayed in the right top network, but is displayed in the right bottom network. }
    \label{Fig4}
 \end{figure}

\newpage 

\begin{figure}[t!]
    \centering
    \includegraphics[scale=1]{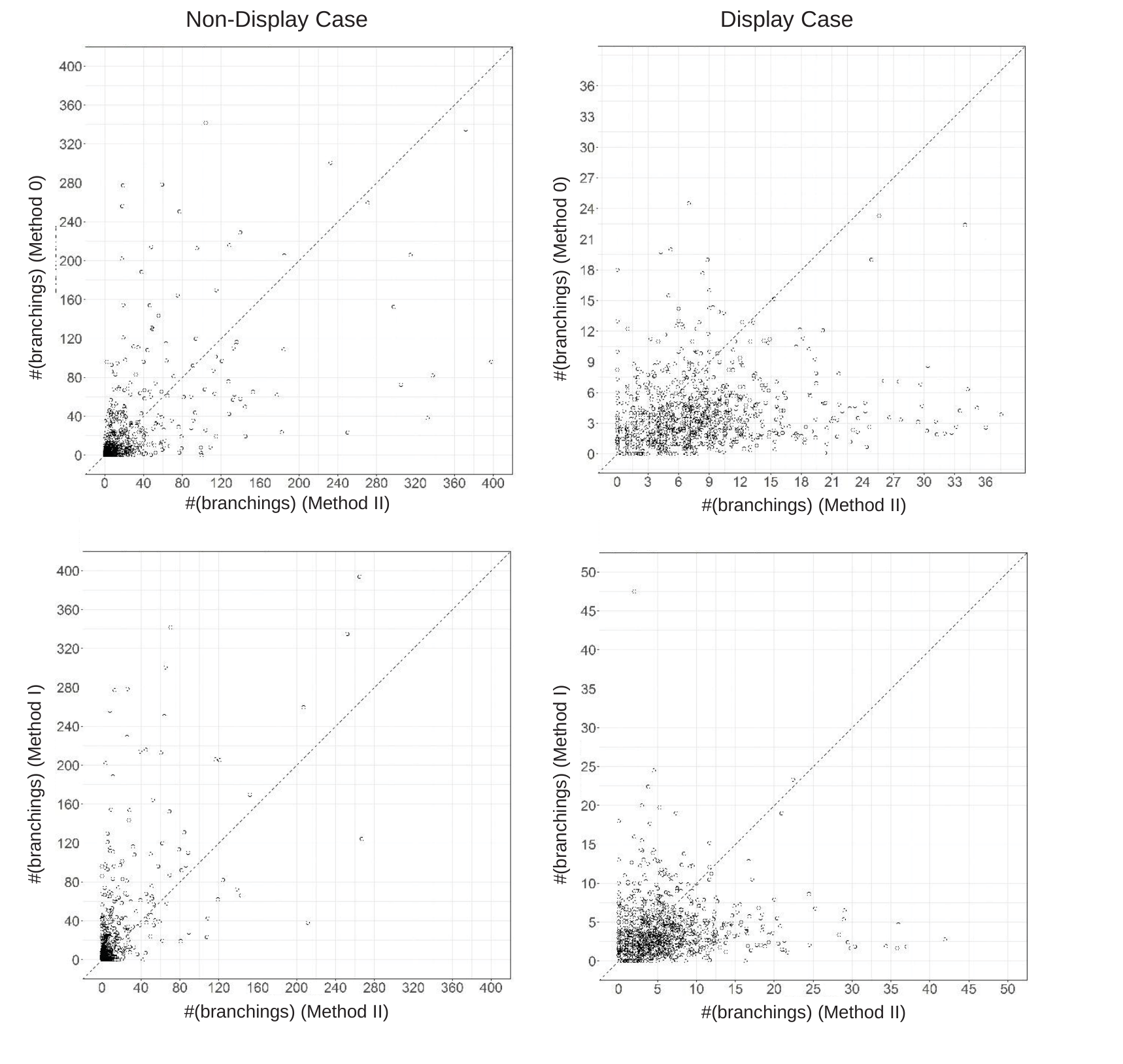}
    \caption{Performance comparison of the three methods on random networks in Group A. Method 0:  the method in \cite{Lu_17_APBC}; Method 1: the simple method presented in Section 3; 
Method 2, the method presented in section 5 (see remark in the section). \label{FigS2}}
\end{figure}

\newpage

\begin{figure}[t!]
    \centering
    \includegraphics[scale=1]{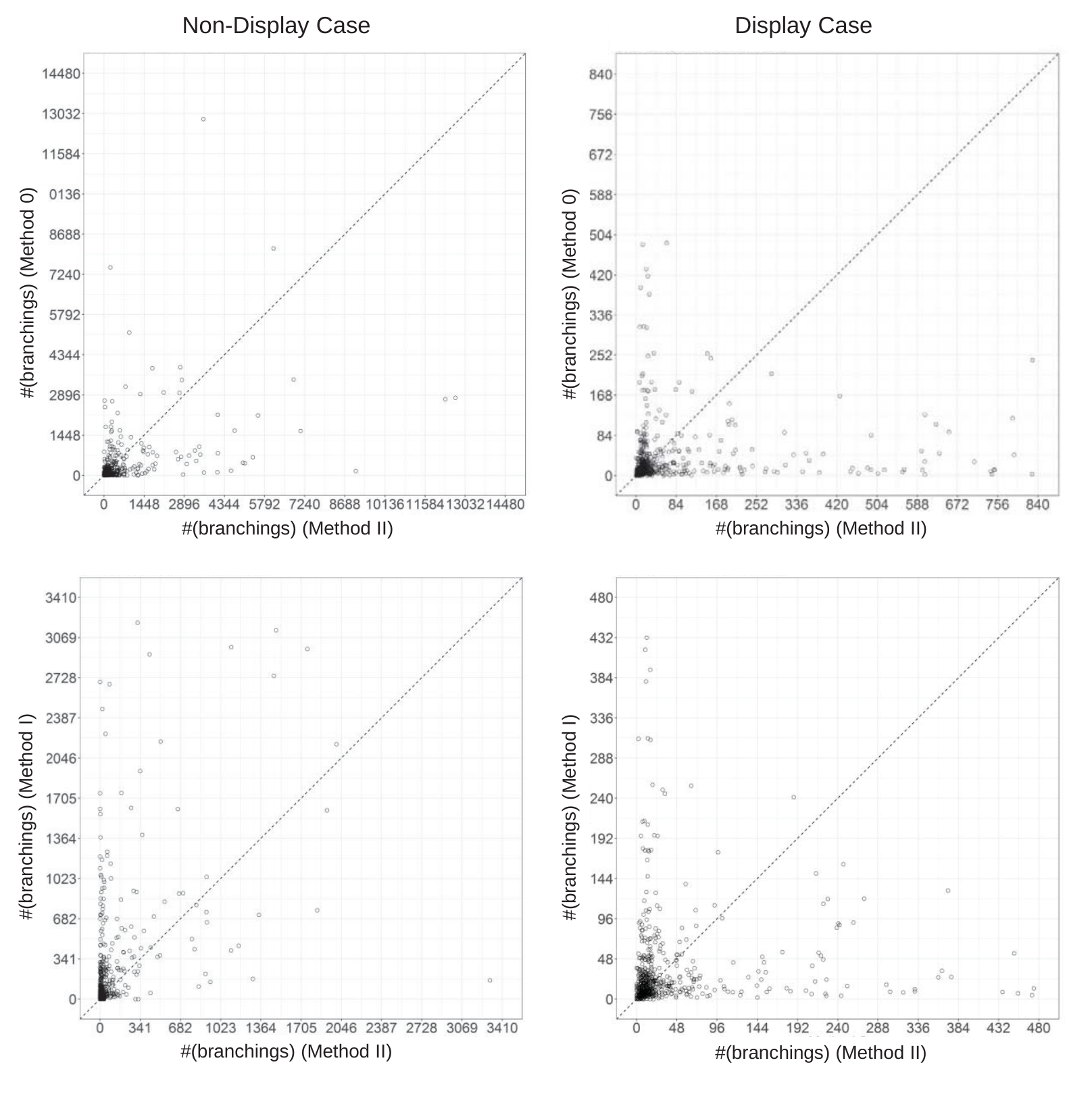}
    \caption{Performance comparison of the three methods on random networks in Group B. Method 0:  the method in \cite{Lu_17_APBC}; Method 1: the simple method presented in Section 3; 
Method 2, the method presented in section 5 (see remark in the section). \label{FigS3}}
\end{figure}

\end{document}